\title{Labeled histories for multifurcating trees}
\renewcommand{\geq}{\geqslant}
\renewcommand{\leq}{\leqslant}
\newtheorem{theorem}{Theorem}
\newtheorem{prop}[theorem]{Proposition}
\newtheorem{lemma}[theorem]{Lemma}
\newtheorem{defi}[theorem]{Definition}
\newtheorem{conjecture}[theorem]{Conjecture}
\title{Labeled histories and maximally probable labeled topologies \\ with multifurcation}
\author{Emily H.~Dickey$^*$ \& Noah A.~Rosenberg\thanks{Department of Biology, Stanford University, Stanford, CA 94305 USA}}
\date{\today}
\begin{document}
\maketitle
%%%%%%%%%%%%%%%%%%%%%%%%%%%%%%%%%%%%%%%%%%%%%%%%%%%%%%%%%%%%%%%%%%%%%%%%%%

%%%%%%%%%%%%%%%%%%%%%%%%%%%%%%%%% Abstract %%%%%%%%%%%%%%%%%%%%%%%%%%%%%%
\begin{abstract} \noindent In mathematical phylogenetics, labeled histories describe the sequences by which sets of labeled lineages coalesce to a shared ancestral lineage. We study labeled histories for at-most-$r$-furcating trees. Consider a rooted leaf-labeled tree in which internal nodes each have $i$ offspring, and $i$ is permitted to range from 2 to $r$ across internal nodes, for a specified value of $r$. For labeled topologies with $n$ leaves, we enumerate the total number of labeled histories with at-most-$r$-furcation. We enumerate the labeled histories possessed by a specific at-most-$r$-furcating labeled topology. We then demonstrate that the maximally probable at-most-$r$-furcating unlabeled topology on $n \geq 2$ leaves---the unlabeled topology whose labelings have the largest number of labeled histories---is the maximally probable strictly bifurcating unlabeled topology on $n$ leaves. Finally, we enumerate labeled histories for at-most-$r$-furcating labeled topologies in a setting that permits simultaneous branchings. We similarly reduce the problem of identifying the maximally probable at-most-$r$-furcating unlabeled topology on $n \geq 2$ leaves, allowing simultaneity, to that of identifying the maximally probable strictly bifurcating unlabeled topology on $n$ leaves, with simultaneity; we conjecture the shape of this bifurcating unlabeled topology. The computations contribute to the study of multifurcation, which arises in various biological processes, and they connect to analogous mathematical settings involving precedence-constrained scheduling.
\end{abstract}

\vspace{.1cm}
\noindent{\bf Keywords:} labeled histories, multifurcation, phylogenetic models

\vspace{.1cm}
\noindent{\bf Mathematics subject classification (MSC2020):} 05A15, 05C05, 92D15

\vspace{.1cm}
\noindent{\bf Author for correspondence:} Noah A.~Rosenberg. Email: noahr@stanford.edu.
%%%%%%%%%%%%%%%%%%%%%%%%%%%%%%%%%%%%%%%%%%%%%%%%%%%%%%%%%%%%%%%%%%%%%%%%%

%%%%%%%%%%%%%%%%%%%%%%%%%%%%% 1 Introduction %%%%%%%%%%%%%%%%%%%%%%%%%%%%
\section{Introduction}

%%%%%%%%%%%%%%%% 1. Labeled histories and their importance (similar to previous paper paragraph 4). %%%%%%%%%%

In evolutionary models that give rise to tree structures, the \emph{labeled history} is an important concept. For a leaf-labeled tree on $n$ leaves, a labeled history encodes the sequence of branchings that produce the specific labeled tree topology. The labeled histories produce a state space for probabilistic computations; the enumerations of labeled histories with $n$ leaves, and of labeled histories compatible with a specific $n$-leaf labeled topology, therefore assist in such computations. Under the Yule--Harding probability model on labeled histories \cite{Harding1971, Yule25}, each labeled history is equally likely to be produced by the process of evolutionary descent. 

Hammersley \& Grimmett~\cite{Hammersley74}, building on a conjecture of Harding~\cite{Harding1971}, described a sequence of bifurcating unlabeled tree shapes, growing with the number of leaves $n$, whose associated labeled topologies maximize the number of labeled histories among all labeled topologies at fixed numbers of leaves. Degnan \& Rosenberg~\cite{Degnan06} termed these unlabeled shapes, and their associated labeled topologies, \emph{maximally probable}: among labeled topologies with a fixed number of leaves, the labeled topologies with the maximally probable unlabeled shape have the highest probability under the Yule--Harding model. Degnan \& Rosenberg~\cite{Degnan06} used the maximally probable unlabeled tree shapes in a proof concerning gene tree and species tree labeled topologies. The maximally probable shapes have also appeared in various other phylogenetic combinatorics problems~\cite{Dickey25, DisantoAndRosenberg17}.

Much of the mathematical study of evolutionary trees has focused on bifurcating trees. However, multifurcating trees---in which internal nodes of a tree might possess more than two immediate offspring---arise in the context of models for phenomena such as epidemic transmission, large family sizes, high variance in reproductive success, and adaptive radiation~\cite{Eldon20, Maranca2023, Wakeley08}. Mathematically convenient formulations for multifurcating trees include strict \emph{$r$-furcation}, in which each internal node of a tree possesses precisely $r$ immediate descendants for a constant $r \geq 2$, and \emph{at-most-$r$-furcation}, in which internal nodes are permitted to vary from 2 to $r$ in their numbers of immediate descendants~\cite{Maranca2023}.

Previously, we extended classical enumeration results for labeled histories from bifurcating trees to strict $r$-furcation~\cite{Dickey25}. Specifically, (1) we enumerated the total number of labeled histories across all $r$-furcating labeled topologies with $n$ leaves, and (2) we enumerated labeled histories for a specific labeled topology with $n$ leaves. In addition, (3) we conjectured the maximally probable $r$-furcating unlabeled tree shape for $n$ leaves. We also considered problems (1) and (2) in the setting in which simultaneous branching events are permitted, counting ``tie-permitting labeled histories'' both with bifurcation and with strict $r$-furcation.

In this paper, we extend these results to at-most-$r$-furcation. For non-simultaneous at-most-$r$-furcation, we (1) we enumerate the total number of labeled histories across all at-most-$r$-furcating labeled topologies with $n$ leaves, and (2) we enumerate labeled histories for a specific at-most-$r$-furcating labeled topology with $n$ leaves. Next, (3) we show that the maximally probable at-most-$r$-furcating unlabeled tree shape on $n$ leaves is the maximally probable bifurcating unlabeled tree shape on $n$ leaves. For simultaneous at-most-$r$-furcation, we solve problems (1) and (2), enumerating the total number of tie-permitting labeled histories across all at-most-$r$-furcating labeled topologies and enumerating tie-permitting labeled histories for a specific at-most-$r$-furcating labeled topology on $n$ leaves. We (3) reduce the problem of identifying the maximally probable at-most-$r$-furcating unlabeled tree shape on $n$ leaves, with simultaneity, to that of identifying the maximally probable bifurcating unlabeled tree shape on $n$ leaves, with simultaneity. We present a conjecture describing this bifurcating shape. 

%%%%%%%%%%%%%%%%%%%%%%%%%%%%% 2 Definitions %%%%%%%%%%%%%%%%%%%%%%%%%%%%
\section{Definitions}

Definitions largely follow Dickey \& Rosenberg \cite{Dickey25}, tracing to  Steel~\cite{Steel16} and King \& Rosenberg~\cite{King23}; definitions for at-most-$r$-furcating trees follow Maranca \& Rosenberg \cite{Maranca2023}. 

We consider leaf-labeled, rooted trees $T$. Each leaf has a distinct label. For a tree $T$, nodes are \emph{leaf nodes} or \emph{internal nodes}; the unique \emph{root node} is included among internal nodes. The \emph{labeled topology} of $T$ is its topological structure together with its leaf labels. The \emph{unlabeled tree shape} or \emph{unlabeled topology} of $T$ is the topological structure without the leaf labels. We indicate the number of leaves of $T$ by $|T|$.

For nodes $v$ and $u$ of $T$, $u$ is \emph{descended} from $v$ and $v$ is \emph{ancestral} to $u$ if $v$ lies on the path from the root to $u$. A node is ancestral to itself and descended from itself. 

A \emph{pendant edge} is an edge that connects an internal node to a leaf. A \emph{cherry node} is an internal node with exactly two child nodes, both of which are leaves. 

For $r \geq 2$, in an \emph{$r$-furcating tree}, also termed a \emph{strictly $r$-furcating tree}, each internal node has exactly $r$ immediate descendant nodes. The generalization of a cherry node for $r$-furcating trees is a \emph{broomstick node}, an internal node whose $r$ children are all leaves.

In an \emph{at-most-$r$-furcating tree}, the number of immediate descendant nodes of an internal node ranges from 2 to $r$ across internal nodes. Bifurcation corresponds to $r=2$. Strictly $r$-furcating trees are also at-most-$r$-furcating. For an at-most-$r$-furcating tree $T$ whose root has immediate subtrees $T_1, T_2, \ldots, T_k$, $2 \leq k \leq r$, we write $T = T_1 \oplus T_2 \oplus \ldots \oplus T_k$. 

We define $A_n$ as the set of at-most-$r$-furcating unlabeled tree shapes with $n$ leaves. The set of all at-most-$r$-furcating unlabeled tree shapes is 
\begin{align*}
    A = \bigcup_{n = 1}^\infty A_n.
\end{align*}
We have $A^* = A \backslash A_1$, where we disregard the trivial tree with one leaf. 

Let $m: A \cup \emptyset \to \mathbb{Z}^+$ be the function that extracts the number of leaves of a tree; for a tree $T$, $m(T)=|T|$. We define $m(\emptyset) = 0$. Let $s: A^* \to A \times A \times (A \cup \emptyset) \times \ldots \times (A \cup \emptyset)$, which maps an at-most-$r$-furcating tree to its immediate subtrees. The empty set is included as an option for subtrees $3, 4, \ldots, r$, as an at-most-$r$-furcating tree possesses at least two subtrees but need not possess more than two. We arrange subtrees of $s$ such that if the $i$th component of $s$, $s_i(T)$, satisfies $s_i(T) = \emptyset$, then $s_j(T) = \emptyset$ for all $j > i$. Finally, we define the function $w: A \cup \emptyset \to \mathbb{Z}^+$, which counts internal nodes of a tree, including the root. Again, we define $w(\emptyset) = 0$. For a strictly $r$-furcating tree $T$, we have
\begin{align*}
    w(T) = \frac{m(T) - 1}{r - 1}.
\end{align*}
The number of internal nodes is $w(T)=m(T)-1$ for bifurcating trees.
 
For a tree $T$, each node is associated with a \emph{time}. Leaves all have the same time. In the classic Yule--Harding model for bifurcating trees, internal nodes have distinct times, and the tree has \emph{non-simultaneous branching}. Given a labeled topology for a rooted tree $T$ with $w$ internal nodes and non-simultaneous branching, a \emph{labeled history} for $T$ is a bijection $f$ from the set of internal nodes of $T$ to $\{1, 2, \ldots, w(T)\}$, so that if node $u$ is descended from node $v$ in $T$ and $u \neq v$, then $f(u) < f(v)$. A labeled history can be viewed as the temporal sequence of internal nodes, with the convention here that the numbers assigned to nodes increase backward in time along genealogical lines, and the root is assigned the value $w(T)$.

The Yule--Harding model assumes that each internal node has a distinct time. If we modify the setting so that \emph{simultaneous branching} is allowed, internal nodes can possess the same time, and we modify the definition of a labeled history accordingly. An \emph{event} is a set of internal nodes that possess the same time. If $z$ is the number of events across all internal nodes, $z \leq w(T)$, then a labeled history $f$ for $T$ is a surjective function from the internal nodes of $T$ to $\{1, 2, \ldots, z\}$, such that if node $u$ is descended from node $v$ in $T$ and $u \neq v$, then $f(u) < f(v)$. The internal nodes of an event then map to the same element of $\{1, 2, \ldots, z\}$. Because simultaneity is permitted, the labeled history is not injective. If internal node $u$ is descended from $v$ and $u \neq v$, then $u$ and $v$ are not part of the same event. We sometimes use the term \emph{tie-permitting} to refer to labeled histories that allow simultaneity, as such labeled histories allow ``ties'' in node times.

Consider a fixed number of leaves $n$. For a given set of labeled topologies---bifurcating, strictly $r$-furcating, or at-most-$r$-furcating, without simultaneity or with simultaneity---a \emph{maximally probable} labeled topology is a labeled topology whose number of labeled histories is greater than or equal to that of all other labeled topologies~\cite{Degnan06}. Because each labeling of an unlabeled topology gives rise to the same number of labeled histories, we use \emph{unlabeled topologies} to indicate the maximally probable labeled topologies, and we refer to unlabeled topologies as maximally probable. The term \emph{maximally probable} arises from the fact that under the Yule--Harding probability model, the probability that a certain labeled topology is produced by the evolutionary process is proportional to its number of labeled histories---so that labeled topologies with the most labeled histories are the most probable labeled topologies under the model. Although the Yule--Harding model does not apply to sets of trees with multifurcation or simultaneity, we continue to use the term \emph{maximally probable} in these settings: a maximally probable unlabeled topology refers to an unlabeled topology whose labelings possess the largest number of labeled histories.  

Finally, we define a \emph{bifurcatable} tree as a tree $T$, with $n \geq 1$, for which each internal node has at most two non-leaf children. Trivially, every bifurcating tree is bifurcatable. The 1-leaf tree is trivially bifurcatable. Examples of bifurcatable trees appear in Figure \ref{fig:bifurcatable}. 

%%%%%%%%%%%%%%%%%%%%%%%%%%% Figure 1 %%%%%%%%%%%%%%%%%%%%%%%%%%%%%%%%%%%
\begin{figure}
    \centering
\includegraphics[width=0.72\linewidth]{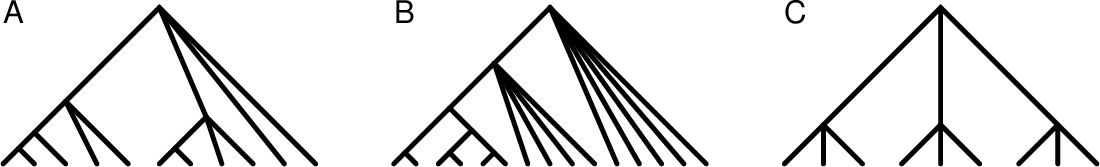}
    \caption{Bifurcatable and non-bifurcatable trees. (A) A bifurcatable at-most-trifurcating tree. (B) A bifurcatable at-most-6-furcating tree. (C) A non-bifurcatable at-most-trifurcating tree. The tree is non-bifurcatable because it possesses a node, the root node, that has more than two non-leaf child nodes.}
    \label{fig:bifurcatable}
\end{figure}
%%%%%%%%%%%%%%%%%%%%%%%%%%%%%%%%%%%%%%%%%%%%%%%%%%%%%%%%%%%%%%%%%%%%%%%%

%%%%%%%%%%%%%%%%%%%%%%%%%%%%%%%%%%% Section 3 %%%%%%%%%%%%%%%%%%%%%%%%%%
\section{Results}

For at-most-$r$-furcating trees $T$ with $n$ leaves, we examine (1) the number of labeled histories across all labeled topologies, (2) the number of labeled histories for a specific labeled topology, and (3) the characterization of maximally probable unlabeled tree shapes. Section \ref{sec:non-sim} proceeds under non-simultaneous branching. Section \ref{sec:sim} allows simultaneous branching.

%%%%%%%%%%%%%%%%%%%%%%%%%%%%%%%%%%% Section 3.1 %%%%%%%%%%%%%%%%%%%%%%%%%%
\subsection{At-most-$r$-furcating trees, non-simultaneous branching}
\label{sec:non-sim}

Consider at-most-$r$-furcating trees with non-simultaneous bifurcations and $n\geq 2$ leaves. Fix $r$, $r \geq 2$. The trivial tree with a single leaf, $n=1$, is also permitted.

%%%%%%%%%%%%%%%%%%%%%%%%%%%%%%%%%%% Section 3.1.1 %%%%%%%%%%%%%%%%%%%%%%%%
\subsubsection{Total number of labeled histories}

Let $A_r(n)$ denote the total number of labeled histories across all labeled topologies with $n$ leaves. Trivially, $A_r(1) = 1$. To count labeled histories, we proceed backward in time from the $n$ lineages. For $n\geq 2$, each group of $i$ lineages, $2 \leq i \leq \min\{n, r\}$, can be the first to coalesce, leaving $n-(i-1)$ lineages. The number of such groups that can be the first to coalesce is ${n \choose i}$. Proceeding recursively, the $n+1-i$ remaining lineages can coalesce in $A_r(n+1-i)$ ways. Summing over all possible values of $i$, we have the following result.
%%%%%%%%%%%%%%% Proposition 1
\begin{prop} \label{eq: num_at_most_r}
    Permitting only non-simultaneous at-most-$r$-furcations, the total number of labeled histories on $n$ leaves, $A_r(n)$, satisfies $A_r(1) = 1$, and for $n \geq 2$, 
    \begin{align*}
        A_r(n) = \sum_{i=2}^{\min\{n, r\}} {n \choose i} A_r(n+1-i).
    \end{align*}
\end{prop}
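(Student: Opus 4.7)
The plan is to prove the recursion by a bijective argument conditioning on the most recent coalescence event in a labeled history. The base case $A_r(1)=1$ is immediate: the trivial one-leaf tree has no internal nodes, so the unique empty function is the only labeled history.

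For $n \geq 2$, I would classify labeled histories by the outdegree $i$ of the internal node that receives the value $1$. Under the convention that labels increase backward in time, this is the most recent coalescence. The at-most-$r$-furcation condition forces $i \leq r$, and the fact that this event operates on lineages drawn from the $n$ present-day leaves forces $i \leq n$, so $i$ ranges over $\{2, 3, \ldots, \min\{n, r\}\}$.

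Fixing $i$, I would exhibit a bijection between labeled histories on $n$ leaves whose first event has outdegree $i$ and pairs consisting of: (a) an unordered $i$-subset of the $n$ leaves (the lineages that merge at this event), and (b) a labeled history on the reduced set of $n+1-i$ lineages obtained by collapsing the chosen $i$ leaves into a single composite leaf. Step (a) contributes the factor $\binom{n}{i}$, and by definition step (b) contributes $A_r(n+1-i)$. Summing over $i$ yields the stated recursion.

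The main thing to verify is that this decomposition is truly bijective, and that is the only nontrivial content of the argument. In the forward direction, every labeled history has a unique internal node assigned the minimum value $1$, and that node has a well-defined outdegree and child-leaf set (its children are necessarily leaves, since any internal child would have to carry a still smaller label). In the reverse direction, given a reduced labeled history on $n+1-i$ leaves with internal-node labels $\{1, \ldots, w(T)-1\}$, I would shift those labels to $\{2, \ldots, w(T)\}$, insert the new event with label $1$ as the parent of the chosen $i$ leaves, and check that the ancestor-descendant inequality $f(u) < f(v)$ is preserved, which it is because the new node is a descendant of every other internal node in the reconstructed tree. This establishes the bijection and completes the proof.
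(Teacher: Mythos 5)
Your proposal is correct and follows essentially the same route as the paper: the paper also conditions on the first coalescence proceeding backward in time (your node labeled $1$), chooses its $i$ participating lineages in $\binom{n}{i}$ ways, and recurses on the remaining $n+1-i$ lineages; you merely make the implicit bijection explicit. One small wording slip: in the reverse direction the new node need not be a descendant of \emph{every} other internal node, but since it carries the minimal label and has no internal descendants, all ancestor--descendant constraints are still satisfied, so the argument stands.
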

%%%%%%%%%%%%%%%
Note that it is convenient to allow $r>n$, although no additional labeled histories exist with $r>n$ relative to the case of $r=n$. With $r=2$, we have $A_2(n)={n \choose 2} A_2(n-1)$ for $n \geq 2$, and the recursion reduces to the recursion that counts labeled histories for strictly bifurcating trees \cite[Proposition 1]{Dickey25}. Applying the proposition recursively, for small values of $n$ and $r$, Table \ref{tab:r_nonsim} reports the values of $A_r(n)$.

%%%%%%%%%%%%%%%%%% Table 1 %%%%%%%%%%%%%%%%%%%%%%%%%%%%%%%%%%%%%%%%%
\begin{table}[tb]
\centering
\small
\begin{tabular}{|c | c | c | c| c | c|}
\hline
       & \multicolumn{5}{c|}{$r$} \\ \cline{2-6} 
$n$    & 2 & 3 & 4 & 5 & 6 \\ \hline
1 &     1 &     1 &     1 &     1 &     1 \\ 
2  &             1 &             1 &             1 &             1 &             1 \\ 
3  &             3 &             4 &             4 &             4 &             4 \\ 
4  &            18 &            28 &            29 &            29 &            29 \\ 
5  &           180 &           320 &           335 &           336 &           336 \\ 
6  &         2,700 &         5,360 &         5,665 &         5,686 &         5,687 \\ 
7  &        56,700 &       123,760 &       131,705 &       132,265 &       132,293 \\ 
8  &     1,587,600 &     3,765,440 &     4,028,430 &     4,046,980 &     4,047,932 \\ 
9  &    57,153,600 &   145,951,680 &   156,800,490 &   157,560,312 &   157,599,498 \\ 
10 & 2,571,912,000 & 7,019,678,400 & 7,567,091,700 & 7,605,060,162 & 7,607,014,464 \\ \hline
\end{tabular}
\caption{The total number of labeled histories for at-most-$r$-furcating trees with $n$ leaves, $A_r(n)$, as obtained by Proposition \ref{eq: num_at_most_r}. The $r=2$ column, $A_2(n)$, corresponds to OEIS sequence A006472, and the $r=3$ column, $A_3(n)$, is OEIS sequence A358072. The diagonal $A_n(n)$ is A256006. For $r \geq n$, $A_r(n) = A_n(n)$. The table accords with Table 1 of \cite{Wirtz24}, which reported the first terms of $A_2(n)$, $A_3(n)$, $A_4(n)$, and $A_n(n)$.}
\label{tab:r_nonsim}
\end{table}
%%%%%%%%%%%%%%%%%%%%%%%%%%%%%%%%%%%%%%%%%%%%%%%%%%%%%%%%%%%%%%%%%%%%%

%%%%%%%%%%%%%%%%%%%%%%%%%%%%%%%%%%% Section 3.1.2 %%%%%%%%%%%%%%%%%%%%%%%%%%%%
\subsubsection{Number of labeled histories for a specific topology}

Next, we enumerate labeled histories for a specific at-most-$r$-furcating labeled topology $T$ with non-simultaneous branchings and $n$ leaves. Let $s(T) = (T_1, T_2, \ldots, T_r)$ be the $r$ immediate subtrees of the root of $T$, with $\big(w(T_1), w(T_2), \ldots, w(T_r)\big)$ internal nodes, respectively. Note that $w(T) = w(T_1) + w(T_2) + \ldots + w(T_r) + 1$.

The computation is analogous to the case of strict $r$-furcation~\cite[eq.~3.6]{Dickey25}. For $n\geq 2$, the number of labeled histories, $N(T)$, is obtained recursively by
\begin{align} \label{eq: LH_at_most_r}
    N(T) = {w(T) - 1 \choose w(T_1), w(T_2), \ldots, w(T_r)} \, N(T_1) \, N(T_2) \,  \cdots N(T_r),
\end{align}
with $N(T) = 1$ for $|T| = 1$ and $T = \emptyset$. 

To obtain a non-recursive formula, define $V^0(T)$ as the set of internal nodes of $T$, including the root. We expand \eqref{eq: LH_at_most_r} and multiply by $w(T)/w(T)$.
%%%%%%%%%%%%% Proposition 2
\begin{prop}[\cite{Semple03}, p.~23]
    \label{prop:non-sim_specific}
    Permitting only non-simultaneous at-most-$r$-furcations, the number of labeled histories for a labeled topology $T$ with $n$ leaves satisfies $N(T) = 1$ for $n=1$ or $T = \emptyset$, and for $n \geq 2$,
    \begin{align*}
    N(T) = \frac{\big(w(T) \big)!}{\prod_{v \in V^0(T)} w(v)},
\end{align*}
where $w(v)$ is the number of internal nodes in the subtree of $T$ rooted at $v$, including $v$ itself.
\end{prop}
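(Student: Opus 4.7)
The plan is to prove the non-recursive formula by strong induction on $w(T)$, using the recursion \eqref{eq: LH_at_most_r} as the engine and performing the suggested multiplication by $w(T)/w(T)$ at the inductive step.

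For the base cases, when $n=1$ or $T=\emptyset$, we have $w(T)=0$, so $V^0(T)=\emptyset$, the product $\prod_{v \in V^0(T)} w(v)$ is empty (equal to $1$), and $\big(w(T)\big)! = 0! = 1$, matching the convention $N(T)=1$.

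For the inductive step, assume $T$ has subtrees $s(T) = (T_1, T_2, \ldots, T_r)$, where some trailing $T_i$ may be $\emptyset$ (for which $w(T_i)=0$ and $N(T_i)=1$ by convention, so they contribute trivially). Applying \eqref{eq: LH_at_most_r} and the inductive hypothesis to each nonempty subtree gives
\begin{align*}
N(T) &= \frac{\big(w(T)-1\big)!}{w(T_1)!\, w(T_2)! \cdots w(T_r)!} \prod_{i=1}^r \frac{w(T_i)!}{\prod_{v \in V^0(T_i)} w(v)} = \frac{\big(w(T)-1\big)!}{\prod_{i=1}^r \prod_{v \in V^0(T_i)} w(v)},
\end{align*}
after the $w(T_i)!$ factors cancel. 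The key structural observation is that the internal nodes of $T$ decompose as the disjoint union $V^0(T) = \{\mathrm{root}(T)\} \sqcup V^0(T_1) \sqcup \cdots \sqcup V^0(T_r)$, and $w\big(\mathrm{root}(T)\big) = w(T)$, since the subtree rooted at the root is $T$ itself. Therefore
\begin{align*}
\prod_{v \in V^0(T)} w(v) = w(T) \cdot \prod_{i=1}^r \prod_{v \in V^0(T_i)} w(v),
\end{align*}
and substituting this into the previous display yields $N(T) = w(T) \cdot (w(T)-1)!\,/\prod_{v \in V^0(T)} w(v) = \big(w(T)\big)!\,/\prod_{v \in V^0(T)} w(v)$, completing the induction. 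This is precisely the promised multiplication by $w(T)/w(T)$, which turns $(w(T)-1)!$ in the numerator into $\big(w(T)\big)!$ while attaching the missing $w(\mathrm{root}(T))=w(T)$ factor to the denominator.

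The only real subtlety is bookkeeping with the empty subtrees permitted in the at-most-$r$-furcating setting: one must check that $N(\emptyset)=1$, $w(\emptyset)=0$, and $V^0(\emptyset)=\emptyset$ make every factor involving an empty $T_i$ equal to $1$, so that such slots neither perturb the multinomial coefficient (since $0!=1$) nor the product over $V^0$. Once this is verified, the cancellation is immediate and the proof is essentially a direct induction; no additional combinatorial insight beyond the disjoint decomposition of $V^0(T)$ is required.
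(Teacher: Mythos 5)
Your proof is correct and follows exactly the route the paper sketches: expand the multinomial recursion \eqref{eq: LH_at_most_r}, apply induction to the subtrees, use the disjoint decomposition of $V^0(T)$, and multiply by $w(T)/w(T)$ to absorb the root factor. The paper gives only a one-line indication of this argument, and your write-up fills in the same steps correctly, including the bookkeeping for empty and single-leaf subtrees.
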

%%%%%%%%%%%%%%
With $m(v)$ equal to the number of leaves descended from $v$, if $w(v) = \big(m(v)-1\big)/(r-1)$ for all $v \in V^0(T)$, then $T$ is strictly $r$-furcating. For the strictly $r$-furcating case, Proposition \ref{prop:non-sim_specific} recovers Proposition 8 of \cite{Dickey25}:
\begin{equation}
\label{eq:semple}
N(T) = \frac{ \big( \frac{n-1}{r-1} \big)! }{\prod_{v \in V^0(T)} \big( \frac{m(v)-1}{r-1} \big) }.
\end{equation}

%%%%%%%%%%%%%%%%%%%%%%%%%%%%%%%%%%% Section 3.1.3 %%%%%%%%%%%%%%%%%%%%%%%%%%%%
\subsubsection{Maximally probable at-most-$r$-furcating labeled topologies} 
\label{sec: max_prob_at_most}

In this section, we show that the unique maximally probable at-most-$r$-furcating tree shape with $n$ leaves is the unique maximally probable strictly bifurcating tree shape. We first introduce a transformation, \emph{bifurcatization}, that converts an arbitrary at-most-$r$-furcating tree with $n$ leaves into a bifurcatable at-most-$r$-furcating tree with $n$ leaves. We then show that bifurcatization cannot decrease the number of labeled histories, from which we conclude that the strictly bifurcating tree is maximally probable.

First, we introduce a transformation for trees that we call \emph{pendant-pruning}.
%%%%%%%%% Definition 3 
\begin{defi} \label{defi:pendant-pruning}
Consider an at-most-$r$-furcating tree $T$ and a node $v$. The \emph{pendant-pruning} transformation $\mathcal{P}_v(T)$ applied to $T$ at node $v$ is defined as follows: 
\begin{enumerate}[label=(\roman*)]
    \item If node $v$ possesses 2 or fewer children, then $\mathcal{P}_v(T)=T$.
    \item If node $v$ possesses 3 or more children, at least 2 of which are internal nodes, then $\mathcal{P}_v(T)$ is obtained from $T$ by pruning all pendant edges descended from $v$ and their associated leaves.
    \item If node $v$ possesses 3 or more children, exactly 0 or 1 of which is an internal node, then $\mathcal{P}_v(T)$ is obtained from $T$ by pruning pendant edges descended from $v$ and their associated leaves until exactly 2 children of $v$ remain. 
\end{enumerate}
\end{defi}
%%%%%%%%%
Because only leaves are removed, pendant-pruning does not change the number of descendant internal nodes for an internal node $v$. If node $v$ possesses $r=2$ immediate descendants (condition (i)), or if $v$ possesses $r > 2$ descendants and the number of pendant edges is less than or equal to $r-2$ (condition (ii)), then pendant-pruning has a unique result. If $v$ possesses $r > 2$ descendants and the number of pendant edges is $r$ or $r-1$ (condition (iii)), then the pendant-pruned tree is not uniquely specified. 

Recall that an at-most-$r$-furcating tree is bifurcatable if each internal node has at most two non-leaf children. Consider a bifurcatable tree, $T$. Apply pendant-pruning to all of its internal nodes. A strictly bifurcating tree is produced, $T'$. Because pendant-pruning does not alter the number of descendant internal nodes for an internal node $v$, by Proposition \ref{prop:non-sim_specific}, $N(T') = N(T)$. We have obtained the following result.

%%%%%%%%% Lemma 4 
\begin{lemma} \label{thm: bi_equal}
    For every \emph{bifurcatable} tree $T$ with $n \geq 1$ leaves, there exists an associated pendant-pruned bifurcating tree $T'$ that is obtained by pendant-pruning at each of the internal nodes of $T$ and that has the same number of labeled histories, $N(T')=N(T)$.
\end{lemma}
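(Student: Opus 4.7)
The plan is to construct $T'$ by applying the pendant-pruning transformation from Definition \ref{defi:pendant-pruning} at every internal node of $T$, and then to verify two things about the resulting tree: first, that $T'$ is strictly bifurcating, and second, that the count given by Proposition \ref{prop:non-sim_specific} is invariant under this construction.

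To check that $T'$ is strictly bifurcating, I would argue by case analysis on a generic internal node $v$ of $T$, appealing to the fact that $T$ is bifurcatable, i.e.\ $v$ has at most two non-leaf children. If $v$ has exactly two children, then case (i) of Definition \ref{defi:pendant-pruning} applies and $v$ retains its two children in $T'$. If $v$ has three or more children of which at least two are internal, then by bifurcatability exactly two are internal and case (ii) prunes all pendant edges, again leaving $v$ with two (internal) children. If $v$ has three or more children but at most one internal child, case (iii) prunes pendant edges until precisely two children of $v$ remain. In every case, $v$ ends up with exactly two children in $T'$, so $T'$ is strictly bifurcating. The ambiguity in case (iii) --- which pendant edges to remove --- is harmless, because the lemma only asserts the \emph{existence} of $T'$.

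To check that $N(T') = N(T)$, I would invoke the non-recursive formula of Proposition \ref{prop:non-sim_specific},
\begin{align*}
N(T) = \frac{w(T)!}{\prod_{v \in V^0(T)} w(v)}.
\end{align*}
Since pendant-pruning removes only leaves and pendant edges, it does not create, delete, or rearrange any internal node. Hence $V^0(T') = V^0(T)$, the total internal count $w(T)$ is unchanged, and for each $v \in V^0(T)$ the number $w(v)$ of internal descendants of $v$ is unchanged. Substituting into the formula gives $N(T') = N(T)$.

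The potential obstacle is purely bookkeeping: ensuring that all three cases of Definition \ref{defi:pendant-pruning} really do leave exactly two children at every internal node once we exploit bifurcatability, and confirming that pendant-pruning at one internal node does not affect the children-structure of any other internal node (which is clear since only leaves are removed, and leaves at one node are not descendants of a different internal node relevant to its own child set). Beyond that, the argument reduces to a direct application of Proposition \ref{prop:non-sim_specific}.
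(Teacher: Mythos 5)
Your proposal is correct and follows essentially the same route as the paper, which applies pendant-pruning at every internal node and then invokes Proposition \ref{prop:non-sim_specific} via the observation that only leaves are removed, so $V^0(T)$ and every $w(v)$ are unchanged. Your explicit case analysis confirming that bifurcatability forces exactly two children at each node after pruning is just a more detailed spelling-out of the paper's one-line claim that a strictly bifurcating tree is produced.
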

%%%%%%%%%
We now introduce a transformation that we call \emph{bifurcatization}, which converts an at-most-$r$-furcating tree into a bifurcatable tree. If a tree $T$ is bifurcatable, then bifurcatization of $T$ results simply in $T$. Otherwise, consider an internal node $v$ of $T$ that has at least 3 non-leaf subtrees: that is, if $T(v)$ is the subtree of $T$ rooted at $v$, then $s\big(T(v)\big) =(T_1, T_2, \ldots, T_k)$, where $T_1, T_2, T_3 \not = \emptyset$. Choose two of these subtrees, say $T_1$ and $T_2$, prune them from $v$, and construct a new subtree $T_1' = T_1 \oplus T_2$ descended from $v$.  We now have $s(v) = (T_1', T_3, \ldots, T_k, \emptyset)$, and the number of non-leaf children of node $v$ has decreased by 1. Formally, we have the following definition.
%%%%%%%%% Definition 5 
\begin{defi} \label{defi:bifurcatization} Consider an at-most-$r$-furcating tree $T$ and a node $v$. A \emph{bifurcatization} $\mathcal{B}_v(T)$ of $T$ at node $v$ is defined as follows: 
\begin{enumerate}[label=(\roman*)]
    \item If node $v$ possesses 2 non-leaf subtrees, then $\mathcal{B}_v(T)=T$.
    \item If node $v$ possesses at least 3 non-leaf subtrees $T_1, T_2, \ldots, T_k$, then $\mathcal{B}_v(T)$ is obtained from $T$ by replacing two of these subtrees, say $T_1$ and $T_2$, by a new subtree $T_1'$ from whose root $T_1$ and $T_2$ are appended. 
\end{enumerate}
\end{defi}
%%%%%%%%%
A (non-trivial) bifurcatization adds an internal node; if node $v$ possesses at least 3 non-leaf subtrees, then the number of internal nodes of $\mathcal{B}_v(T)$ exceeds that of $T$ by 1. Note that if a node $v$ possesses at least 3 non-leaf subtrees, then multiple bifurcatizations $\mathcal{B}_v(T)$ exist. Figure \ref{fig:transformation} shows bifurcatization applied to a tree twice sequentially, producing a bifurcatable tree.
%%%%%%%%%%%%%%%%%%%%%%%%%%%%%%%%%%%%% Figure 2 %%%%%%%%%%%%%%%%%%%%
\begin{figure}
    \centering
    \includegraphics[width=0.75\linewidth]{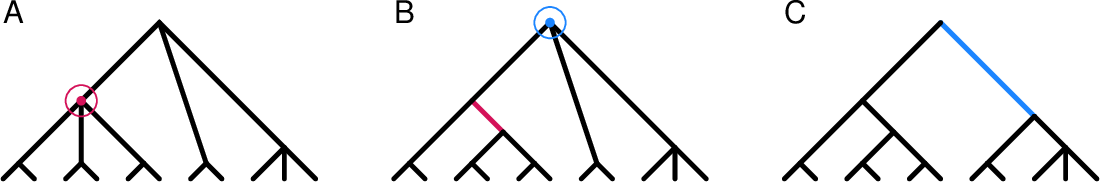}
    \caption{Bifurcatization. (A) A non-bifurcatable tree. (B) A tree produced by bifurcatization at the red internal node of (A). (C) A tree produced by bifurcatization at the blue internal node of (B). The tree in (C) is bifurcatable.  Bifurcatization is used in proving results \ref{thm:transform_more_LH}, \ref{thm:bi_max_prob}, \ref{thm:sim_transformation}, and \ref{thm:sim_max_bi}.}
    \label{fig:transformation}
\end{figure}
%%%%%%%%%%%%%%%%%%%%%%%%%%%%%%%%%%%%%%%%%%%%%%%%%%%%%%%%%%%%%%%%%%%

We now show that bifurcatization applied to a non-bifurcatable at-most-$r$-furcating tree $T$ increases the number of labeled histories.

%%%%% Lemma 6 
\begin{lemma} \label{thm:transform_more_LH}
Consider a non-bifurcatable at-most-$r$-furcating tree $T$, and suppose $T'$ is obtained from $T$ by bifurcatization. Then $N(T') > N(T)$. 
\end{lemma}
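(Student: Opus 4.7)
The plan is to apply the non-recursive formula $N(T) = w(T)!/\prod_{v \in V^0(T)} w(v)$ from Proposition~\ref{prop:non-sim_specific}, compute the ratio $N(T')/N(T)$ explicitly, and show it exceeds $1$. Let $v^*$ be the internal node of $T$ at which bifurcatization is performed. Because $T$ is non-bifurcatable, $v^*$ has at least three non-leaf children; call three of them $T_1, T_2, T_3$. Suppose the bifurcatization replaces $T_1$ and $T_2$ by $T_1 \oplus T_2$, rooted at a new internal node $u$, so that $w(T') = w(T) + 1$. Let $v_0 = v^*, v_1, \ldots, v_d$ be the ancestral chain from $v^*$ up to the root $v_d$ of $T$, and abbreviate $a_i = w_T(v_i)$.

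I would then trace how the subtree-size function changes: values $w(v)$ for nodes $v$ outside the chain $v_0, \ldots, v_d$ (including every node inside $T_1$ or $T_2$) are unchanged; each $a_i$ increases by exactly $1$ in $T'$, because the new internal node $u$ lies in the subtree rooted at every $v_i$; and one new denominator factor $w_{T'}(u) = w(T_1) + w(T_2) + 1$ enters. After the $(w(T)+1)!/w(T)! = a_d + 1$ factor cancels the $i = d$ term of the resulting product, the calculation should yield
\begin{equation*}
\frac{N(T')}{N(T)} = \frac{a_d}{w(T_1) + w(T_2) + 1} \, \prod_{i=0}^{d-1} \frac{a_i}{a_i + 1}.
\end{equation*}

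The core of the argument is a telescoping inequality. Because the subtree rooted at $v_{i+1}$ contains $v_{i+1}$ itself together with the subtree rooted at its child $v_i$, I get $a_{i+1} \geq a_i + 1$, and hence $a_i/(a_i+1) \geq a_i/a_{i+1}$. The product then telescopes to $\prod_{i=0}^{d-1} a_i/(a_i+1) \geq a_0/a_d$, yielding $N(T')/N(T) \geq a_0/(w(T_1) + w(T_2) + 1)$. Non-bifurcatability enters in exactly one place: the third non-leaf subtree $T_3$ has $w(T_3) \geq 1$, so $a_0 = 1 + \sum_j w(T_j) \geq w(T_1) + w(T_2) + 2$, which gives the strict inequality $N(T')/N(T) > 1$.

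The main obstacle I anticipate is the bookkeeping: identifying precisely which $w(v)$ values change and by exactly how much when the new node $u$ is grafted in, and handling the boundary case $d = 0$ in which $v^*$ is itself the root and the telescoping product is empty. Once the ratio is organized cleanly, the non-bifurcatability hypothesis enters in a single, clean place---supplying the third non-leaf subtree $T_3$---so the strictness at the end is essentially forced.
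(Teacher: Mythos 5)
Your proof is correct, and it takes a genuinely different route from the paper's. The paper argues directly on labeled histories: it constructs, from each labeled history $f$ of $T$, a labeled history of $T'$ by inserting the new node $k$ at the time slot of $v$ and shifting later labels up by one, which gives $N(T) \leq N(T')$; strictness then comes from exhibiting one labeled history of $T$ (one in which $c_3$ has a larger label than $c_1$ and $c_2$) that extends to two distinct labeled histories of $T'$, by ordering $k$ and $c_3$ either way. Your argument instead runs the closed-form product formula of Proposition~\ref{prop:non-sim_specific} through the transformation: the bookkeeping you describe is exactly right (only the $w$-values along the ancestral chain of $v^*$ change, each by $+1$, plus the one new factor $w(T_1)+w(T_2)+1$), the ratio you display is correct, the telescoping bound $\prod_{i=0}^{d-1} a_i/(a_i+1) \geq a_0/a_d$ is valid since $a_{i+1} \geq a_i + 1$, and non-bifurcatability supplies $w(T_3) \geq 1$ and hence $a_0 \geq w(T_1)+w(T_2)+2$, forcing the ratio strictly above $1$; the $d=0$ case goes through with the empty product. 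What your approach buys is a quantitative lower bound $N(T')/N(T) \geq a_0/\bigl(w(T_1)+w(T_2)+1\bigr)$ rather than a bare inequality. What the paper's approach buys is portability: the injection argument transfers almost verbatim to the tie-permitting setting of Lemma~\ref{thm:sim_transformation}, where no product formula analogous to Proposition~\ref{prop:non-sim_specific} is available, whereas your computation is confined to the non-simultaneous case.
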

%%%%%% Proof of Lemma 6
\begin{proof}
     As $T$ is non-bifurcatable, consider an internal node $v$ that has at least 3 non-leaf subtrees, and call their roots $c_1, c_2, \ldots, c_m$, where $j \geq 3$. Let $T'= \mathcal{B}_v(T)$ be a bifurcatization of $T$ at node $v$. Let $k$ be the new internal node created in the bifurcatization, with $c_1$ and $c_2$ as the children of $k$. The number of internal nodes $w(T')$ satisfies $w(T')=w(T)+1$. 

    First, we show that for each labeled history of $T$, we can construct a labeled history of $T'$. Consider a labeled history $f$ of $T$, with labels $\{1, 2, \ldots, w(T)\}$ as the image of the internal nodes of $T$. We construct a labeled history $f'$ for $T'$, with labels $\{1, 2, \ldots, w(T), w(T)+1\}$ as the image of the internal nodes of $T'$; an example appears in Figure \ref{fig:Transformation_LH_matching}. For the node $v$, at which the bifurcatization takes place, suppose the numerical label is $f(v) = i$. For $i \leq \ell \leq w(T)$, we set $f'^{-1}(\ell + 1) = f^{-1}(\ell)$. That is, we increment the label by 1 for all nodes with labels larger than or equal to that of $v$, so that the relative ordering of these nodes remains unchanged. We let $f'(k) = f(v) = i$. We retain $f'^{-1}(\ell) = f^{-1}(\ell)$ for $1 \leq \ell \leq i-1$; again, the relative ordering of these nodes remains unchanged. Because for each labeled history of $T$, we can construct at least one corresponding labeled history for $T'$, $N(T) \leq N(T')$.

    Next, to prove that the inequality is strict, we show that there exists some labeled history of $T$ that can be associated with \emph{multiple} labeled histories for $T'$. Because $c_1, c_2, \ldots, c_m$ share the same parent node in $T$, there must exist a labeled history of $T$ in which $f(c_3) > f(c_2)$ and $f(c_3) > f(c_1)$. Let $f(c_3) = j$, and for $j+1 \leq \ell \leq \omega(T)$, we set $f'^{-1}(\ell + 1) = f^{-1}(\ell)$. That is, we increment the label by 1 for all nodes with labels larger than that of $c_3$, so that the relative ordering of these nodes remains unchanged. For our first labeled history, we set $f'^{-1}(c_3) = j + 1$, $f^{-1}(k) = j$, and retain $f'^{-1}(\ell) = f^{-1}(\ell)$ for $1 \leq \ell \leq j-1$. For our second labeled history, we set $f'^{-1}(k) = j + 1$, $f^{-1}(c_3) = j$, and retain $f'^{-1}(\ell) = f^{-1}(\ell)$ for $1 \leq \ell \leq j-1$.
        
    Because for each labeled history of $T$, we can construct at least one corresponding labeled history for $T'$ and because there exists at least one labeled history for which we can construct more than one corresponding labeled history for $T'$, $N(T) < N(T')$.
\end{proof}

%%%%%%%%%%%%%%%%%%% FIGURE 3 FOR PROOF OF LEMMA %%%%%%%%%%%%%%%%%%%
\begin{figure}
    \centering
    \includegraphics[width=0.6\linewidth]{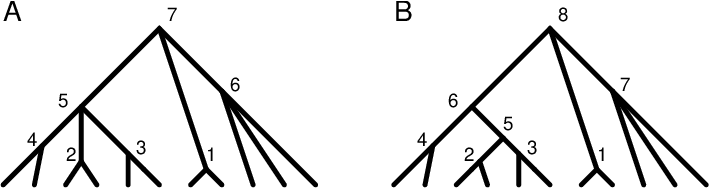}
    \caption{The bifurcatization in the proof of Lemma \ref{thm:transform_more_LH}. (A) Tree $T$. (B) Tree $T'$ after the bifurcatization. The lemma constructs a labeled history for $T'$ from a labeled history for $T$; in (B), the node 5 corresponds to node $k$ in the proof of the lemma.}
    \label{fig:Transformation_LH_matching}
\end{figure}
%%%%%%%%%%%%%%%%%%%%%%%%%%%%%%%%%%%%%%%%%%%%%%

With Lemmas \ref{thm: bi_equal} and \ref{thm:transform_more_LH}, we can show that the unique maximally probable at-most-$r$-furcating tree with $n$ leaves is the unique strictly bifurcating tree with $n$ leaves. 

We first recall the form of the unique maximally probable strictly bifurcating tree with $n \geq 3$ leaves. 
%%%%%%%%%%%% Theorem 7
\begin{theorem}[\cite{Hammersley74}]
\label{thm: hammersley}
Permitting only non-simultaneous bifurcations, the unique unlabeled topology whose labelings have the largest number of labeled histories among unlabeled topologies with $n$ leaves takes the form $U_n^* = U_t^* \oplus U_{n-t}^*$, where for $n \geq 3$,
\begin{align*}
t = 2^{ \lfloor \log_2 (\frac{n-1}{3})\rfloor + 1}.
\end{align*}
\end{theorem}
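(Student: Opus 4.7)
My plan is to convert the theorem into a recursive minimization via the explicit formula for $N(T)$, then solve the minimization by strong induction on $n$. Specializing equation \eqref{eq:semple} to $r = 2$ yields
\[
N(T) = \frac{(n-1)!}{\prod_{v \in V^0(T)} \bigl(m(v) - 1\bigr)}
\]
for any bifurcating labeled topology $T$ on $n$ leaves, so maximizing $N(T)$ is equivalent to minimizing the product $\pi(T) := \prod_{v \in V^0(T)} \bigl(m(v) - 1\bigr)$. Write $c(n) := \min_{|T| = n}\,\pi(T)$.

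Because $\pi$ factors multiplicatively at the root---specifically, $\pi(T_1 \oplus T_2) = (n - 1)\,\pi(T_1)\,\pi(T_2)$---any optimal tree has optimal subtrees, and $c$ satisfies
\[
c(n) = (n - 1)\,\min_{1 \leq t \leq n/2}\,c(t)\,c(n - t), \qquad c(1) = 1.
\]
It follows that $U_n^* = U_{t^*}^* \oplus U_{n - t^*}^*$ for the (as-yet-unidentified) minimizer $t^* = t^*(n)$, and the theorem reduces to showing $t^*(n) = 2^{\lfloor \log_2((n-1)/3)\rfloor + 1}$. This claimed $t^*(n)$ is equivalently the unique power of $2$ in the interval $\bigl((n-1)/3,\,2(n-1)/3\bigr]$, so $t^*(n)$ stays constant on each dyadic block $n - 1 \in [3 \cdot 2^k,\,3 \cdot 2^{k+1})$.

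The main obstacle is the induction that pins down this minimizer. Under the inductive hypothesis, the shape $U_t^*$ and hence $c(t)$ is explicit for every $t < n$, so both the candidate product $c(t^*(n))\,c(n - t^*(n))$ and every competitor $c(t)\,c(n - t)$ are available via the recursion. The delicate step is proving strict inequality $c(t^*(n))\,c(n - t^*(n)) < c(t)\,c(n - t)$ for every $t \in \{1, 2, \ldots, \lfloor n/2 \rfloor\}\setminus\{t^*(n)\}$, which establishes both optimality and uniqueness. I anticipate this reducing to a case analysis on the position of $n - 1$ within its dyadic block, coupled with a chain of ratio comparisons exploiting the recursion for $c$. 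Each individual comparison is an elementary inequality between products of integers, but coordinating them into a uniform argument across all alternative splits is the combinatorial heart of Hammersley \& Grimmett's original proof.
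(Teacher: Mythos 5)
Your proposal sets up the right framework but stops short of proving anything. The reduction is sound: specializing eq.~\ref{eq:semple} to $r=2$ gives $N(T)=(n-1)!/\prod_{v\in V^0(T)}(m(v)-1)$, the product factors multiplicatively at the root, optimal substructure follows, and the theorem becomes the claim that the split at the unique power of $2$ in $\bigl((n-1)/3,\,2(n-1)/3\bigr]$ strictly minimizes $c(t)\,c(n-t)$. But that claim \emph{is} the theorem. The step you describe as ``the delicate step''---proving $c(t^*)\,c(n-t^*)<c(t)\,c(n-t)$ for every competing split $t$---is the entire mathematical content, and you defer it with ``I anticipate this reducing to a case analysis.'' No case analysis is given, no ratio comparison is carried out, and no inductive invariant (e.g., an explicit closed form or monotonicity property of $c$ on dyadic blocks) is identified that would make the comparisons go through uniformly. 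As written, the proposal is a correct reduction plus an acknowledgment that the hard part remains; it does not establish optimality or uniqueness. There is also a minor mismatch in your parametrization: you minimize over $1\leq t\leq n/2$, but the claimed $t=2^{\lfloor\log_2((n-1)/3)\rfloor+1}$ can exceed $n/2$ (e.g., $t=4$ for $n=7$), so the minimizer in your range is $n-t$ rather than $t$; harmless, but it should be stated.

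For context, the paper offers no proof of this statement to compare against: Theorem \ref{thm: hammersley} is imported by citation from Hammersley \& Grimmett, and the surrounding text only records the resulting sequence of splits $(t,n-t)$ and then uses the theorem in Proposition \ref{prop:monotonic} and Theorem \ref{thm:bi_max_prob}. Your outline does match the broad strategy of the cited source (recast as a minimization of a product over internal nodes, then induct on $n$ with a dyadic case analysis), so the plan is reasonable---but to count as a proof you would need to actually execute the induction, which is where all the difficulty lives.
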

%%%%%%%%%%%%%%
The topology is obtained by a decomposition at the root into trees of size $t$ equal to a certain power of 2, and the remainder equal to $n-t$. We have $(t,n-t)=(1,1)$ for the trivial $n=2$, and for $n=3$ to $n=16$, $(t,n-t)=(1,2)$, $(2,2)$, $(2,3)$, $(2,4)$, $(4,3)$, $(4,4)$, $(4,5)$, $(4,6)$, $(4,7)$, $(4,8)$, $(8,5)$, $(8,6)$, $(8,7)$, and $(8,8)$. 

Note that by eq.~\ref{eq: LH_at_most_r}, for maximally probable unlabeled topologies with $n \geq 3$ leaves, we can show that the number of labeled histories is strictly monotonically increasing with the number of leaves $n$. 

%%%%%%%%%% Proposition 8
\begin{prop}
\label{prop:monotonic}
For $n \geq 3$ leaves, the number of labeled histories of the unique maximally probable bifurcating unlabeled topology, $N(U_n^*)$, increases monotonically with $n$.
\end{prop}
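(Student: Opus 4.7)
The plan is to induct on $n$, at each step comparing $N(U_n^*)$ with the number of labeled histories of a specific (though not necessarily maximally probable) bifurcating tree on $n+1$ leaves built from the Hammersley--Grimmett decomposition. By Theorem \ref{thm: hammersley}, write $U_n^* = U_{t_n}^* \oplus U_{n-t_n}^*$ with $t_n = 2^{\lfloor \log_2((n-1)/3)\rfloor + 1}$ for $n \geq 3$. For the base case $n=3$, compute directly using \eqref{eq: LH_at_most_r}: $N(U_3^*) = 1$ while $N(U_4^*) = \binom{2}{1} N(U_2^*)^2 = 2$.

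For the inductive step, assume $N(U_{k+1}^*) \geq N(U_k^*)$ for all $1 \leq k < n$, with strict inequality for $k \geq 3$. Define the bifurcating tree on $n+1$ leaves
\begin{equation*}
T_{n+1} \;=\; U_{t_n}^* \oplus U_{n+1-t_n}^*.
\end{equation*}
Note $T_{n+1}$ need not equal $U_{n+1}^*$, but since $U_{n+1}^*$ is maximally probable, $N(U_{n+1}^*) \geq N(T_{n+1})$. Applying \eqref{eq: LH_at_most_r} to both $T_{n+1}$ and $U_n^*$ and dividing the resulting multinomial coefficients yields
\begin{equation*}
\frac{N(T_{n+1})}{N(U_n^*)} \;=\; \frac{\binom{n-1}{t_n-1}}{\binom{n-2}{t_n-1}} \cdot \frac{N(U_{n+1-t_n}^*)}{N(U_{n-t_n}^*)} \;=\; \frac{n-1}{n-t_n} \cdot \frac{N(U_{n+1-t_n}^*)}{N(U_{n-t_n}^*)}.
\end{equation*}
For $n \geq 4$, the defining formula forces $t_n \geq 2$ because $(n-1)/3 \geq 1$, and so $(n-1)/(n-t_n) > 1$. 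The inductive hypothesis, combined with the trivial equalities $N(U_1^*) = N(U_2^*) = N(U_3^*) = 1$, gives $N(U_{n+1-t_n}^*) \geq N(U_{n-t_n}^*)$ in every subcase. Hence $N(T_{n+1}) > N(U_n^*)$, and therefore $N(U_{n+1}^*) \geq N(T_{n+1}) > N(U_n^*)$, closing the induction.

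The main technical obstacle is that the inductive hypothesis supplies only non-strict inequality when the smaller subtree has at most three leaves (that is, when $n - t_n \in \{2, 3\}$, the only small cases that actually arise). Strict monotonicity in the conclusion must therefore be extracted entirely from the combinatorial prefactor $(n-1)/(n-t_n)$, which in turn requires the explicit lower bound $t_n \geq 2$ for $n \geq 4$. This lower bound is immediate from the Hammersley--Grimmett formula, but it is the step that one must not skip; if the inductive comparison were made against $T_{n+1} = U_n^* \oplus U_1$ instead, the prefactor would collapse to $1$ and strictness would be lost.
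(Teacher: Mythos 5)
Your proof is correct, and it takes a genuinely different route from the paper's. The paper compares the two optimal trees $U_{k+1}^*$ and $U_{k+2}^*$ head-on, which forces a case split on whether the Hammersley--Grimmett parameter satisfies $t_{k+2}=t_{k+1}$ or $t_{k+2}=2t_{k+1}$; the second case requires identifying a shared subtree of size $2^{\ell+1}$ in both trees and computing an explicit ratio, and the argument needs six base cases (up to $n=8$) so that $t_k \geq 4$ throughout the induction. You instead compare $U_n^*$ against the \emph{feasible but not necessarily optimal} candidate $T_{n+1}=U_{t_n}^*\oplus U_{n+1-t_n}^*$ and invoke maximality of $U_{n+1}^*$ to get $N(U_{n+1}^*)\geq N(T_{n+1})$. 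This eliminates the case analysis entirely, since the left subtree size is held fixed by construction, and it reduces the base cases to $n=3$; the only structural input you need from Theorem \ref{thm: hammersley} beyond the decomposition itself is $t_n\geq 2$ for $n\geq 4$, which you correctly verify and correctly identify as the source of strictness (your ratio computation $\binom{n-1}{t_n-1}/\binom{n-2}{t_n-1}=(n-1)/(n-t_n)$ checks out, as does the applicability of the weak inductive hypothesis to the index $n-t_n\geq 2$). The paper's direct comparison does yield explicit ratios between consecutive optimal values as a byproduct, but nothing downstream uses them; your argument is the shorter and more robust of the two.
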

\begin{proof}
We proceed by induction. It is convenient to establish monotonicity for the first several base cases by direct calculation: $N(U_3^*) = 1$, $N(U_4^*) = 2$, $N(U_5^*) = 3$, $N(U_6^*) = 8$, $N(U_7^*) = 20$, $N(U_8^*) = 80$.

Suppose that for all $i$ such that $3 \leq i \leq k$, $N(U_i^*) < N(U_{i+1}^*)$. We prove $N(U_{k+1}^*) < N(U_{k+2}^*)$. Note that based on the sequence of base cases, we can assume $k+1 \geq 8$, so that $t_k = 2^{\lfloor \log_2(\frac{k}{3}) \rfloor + 1} \geq 4$.

By Theorem \ref{thm: hammersley}, $U_{k+1}^* = U_{t_{k+1}}^* \oplus U_{k+1 - t_{k+1}}^*$, where $t_{k+1} = 2^{\lfloor \log_2(\frac{k}{3}) \rfloor + 1}$, and $U_{k+2}^* = U_{t_{k+2}}^* \oplus U_{k+2 - t_{k+2}}^*$, where $t_{k+2} = 2^{\lfloor \log_2(\frac{k+1}{3}) \rfloor + 1} $. Note that $t_{k+1}, t_{k+2} \geq 4$ because $k+1 \geq 8$. 

We first demonstrate that a subtree size is shared by $U_{k+1}^*$ and $U_{k+2}^*$. There are two cases. (i) If $t_{k+1} = t_{k+2}$, then  $|U_{t_{k+1}}^*| = |U_{t_{k+2}}^*|$. Otherwise, (ii) if $t_{k+1} \neq t_{k+2}$, then incrementing the tree size by 1 increments the exponent of the size of the ``left'' subtree by 1, so that $t_{k+2} = 2t_{k+1}$. This case requires $\lfloor \log_2(\frac{k+1}{3}) \rfloor = \lfloor \log_2(\frac{k}{3}) \rfloor + 1$. Writing $\ell = \lfloor \log_2(\frac{k+1}{3}) \rfloor$, $\log_2(\frac{k+1}{3})$ must be an integer, or $\frac{k+1}{3} = 2^{\ell}$, so that for case (ii) to apply, $k+2 = 3 \times 2^\ell + 1$. Because $t_{k+2} = 2^{\ell + 1}$, the other subtree of $U_{k+2}^*$ has size $k+2-t_{k+2} = 3 \times 2^\ell + 1 - 2^{\ell + 1} = 2^{\ell} + 1$ leaves. The subtrees of $U_{k+1}^*$ have sizes $2^\ell$ and $k+1-2^\ell = 3 \times 2^\ell -2^\ell = 2^{\ell+1}$, so that both $U_{k+2}^*$ and $U_{k+1}^*$ have a subtree of size $2^{\ell+1}$.

Next, by eq.~\ref{eq: LH_at_most_r}, 
\begin{align}
\label{eq:Nk2}
N(U_{k+2}^*) &= {k \choose t_{k+2}-1} \, N(U_{t_{k+2}}^*) \, N(U_{k+2-t_{k+2}}^*), \\
\label{eq:Nk1}
N(U_{k+1}^*) &= {k-1 \choose t_{k+1}-1} \, N(U_{t_{k+1}}^*) \, N(U_{k+1-t_{k+1}}^*).
\end{align}
For convenience, write the binomial coefficients $B_{k+2}={k \choose t_{k+2}-1}$ and $B_{k+1}= {k-1 \choose t_{k+1}-1}$. 

Continuing the two cases above, if (i) $t_{k+1} = t_{k+2}$, then $B_{k+2} > B_{k+1}$ because ${n \choose r} > {n-1 \choose r}$ for $n > r$. We have $N(U_{t_{k+2}}^*) = N(U_{t_{k+1}}^*)$, and $N(U_{k+2-t_{k+2}}^*) > N(U_{k+1-t_{k+1}}^*)$ by the inductive hypothesis. Using eqs.~\ref{eq:Nk2} and \ref{eq:Nk1}, we conclude $N(U_{k+2}^*) = B_{k+2} \, N(U_{t_{k+2}}^*) \, N(U_{k+2-t_{k+2}}^*)  >  B_{k+1} \, N(U_{t_{k+2}}^*) \, N(U_{k+1-t_{k+1}}^*) = N(U_{k+1}^*)$.

Otherwise, if (ii) $t_{k+2}=2t_{k+1}$, then $k+2 = 3 \times 2^\ell + 1$, $t_{k+2}=k+1-2^\ell = 2^{\ell+1}$, and 
\begin{align*}
N(U_{k+2}^*) &= \frac{(3 \times 2^\ell - 1)!}{(2^{\ell+1}-1)! \, (2^{\ell})!}
N(U_{2^{\ell+1}}^*) \, N(U_{2^{\ell}+1}^*), \\
N(U_{k+1}^*) &= \frac{(3 \times 2^\ell - 2)!}{(2^\ell - 1)! \, (2^{\ell+1}-1)!}
N(U_{2^\ell}^*) \, N(U_{2^{\ell+1}}^*). 
\end{align*}
Taking the ratio of $N(U_{k+2}^*)$ and $N(U_{k+1}^*)$, we apply the inductive hypothesis to trees of size $2^\ell$, noting $2^\ell \geq 4$ for $k+1 \geq 8$, and we obtain 
$$\frac{N(U_{k+2}^*)}{N(U_{k+1}^*)} = \bigg( \frac{3 \times 2^\ell - 1}{2^\ell} \bigg) \bigg( \frac{ N(U_{2^{\ell}+1}^*)}{N(U_{2^{\ell}}^*)} \bigg) > 1.$$
The induction is now complete.
\end{proof}

%%%%%%%%%%% Theorem 9 
\begin{theorem}
\label{thm:bi_max_prob}
    For $r \geq 2$ and $n \geq 1$, the maximally probable at-most-$r$-furcating tree topology with $n$ leaves is unique, and it is the unique maximally probable strictly bifurcating tree with $n$ leaves.
\end{theorem}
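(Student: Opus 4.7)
The plan is to string together the two transformations introduced earlier in this section (bifurcatization and pendant-pruning) with the Hammersley--Grimmett description of $U_n^*$ (Theorem \ref{thm: hammersley}) and the strict monotonicity of $N(U_n^*)$ from Proposition \ref{prop:monotonic}. The idea is that any at-most-$r$-furcating tree $T$ on $n$ leaves can be upgraded, without decreasing its number of labeled histories, first to a bifurcatable tree on $n$ leaves and then to a strictly bifurcating tree on at most $n$ leaves, which is in turn dominated by $U_n^*$.

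Concretely, for $n \geq 3$ (the cases $n = 1, 2$ being trivial since only one unlabeled topology exists) and any at-most-$r$-furcating $T$ on $n$ leaves, I would first iterate bifurcatization $\mathcal{B}_v$ at internal nodes possessing three or more non-leaf subtrees. Each such step strictly reduces the number of non-bifurcatable internal nodes while weakly increasing the labeled-history count, so after finitely many applications one obtains a bifurcatable tree $T^\dagger$ on $n$ leaves with $N(T) \leq N(T^\dagger)$, strict unless $T$ was already bifurcatable (Lemma \ref{thm:transform_more_LH}). Next, pendant-prune $T^\dagger$ at every internal node via Lemma \ref{thm: bi_equal}, producing a strictly bifurcating tree $T'$ on some $n' \leq n$ leaves with $N(T^\dagger) = N(T')$. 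Theorem \ref{thm: hammersley} then gives $N(T') \leq N(U_{n'}^*)$, with equality iff $T' = U_{n'}^*$, and Proposition \ref{prop:monotonic} gives $N(U_{n'}^*) \leq N(U_n^*)$. Concatenating yields $N(T) \leq N(U_n^*)$, so $U_n^*$ is maximally probable.

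Uniqueness follows by tracing when each inequality in the chain can be an equality. If $T$ is not bifurcatable, the bifurcatization step is strict; if $T$ is bifurcatable but not strictly bifurcating, pendant-pruning removes at least one leaf, so $n' < n$ and the monotonicity step is strict; otherwise $T = T'$ is already strictly bifurcating and the uniqueness clause of Hammersley--Grimmett forces $T = U_n^*$. The main obstacle I anticipate is careful bookkeeping at the low end: Proposition \ref{prop:monotonic} only supplies strict monotonicity from $n \geq 3$ onward, so in situations where pendant-pruning collapses $T$ all the way down to $n' \in \{1, 2\}$, one must either invoke transitivity through the already-computed values $N(U_2^*) = N(U_3^*) = 1 < N(U_4^*) = 2$ or verify those small cases by direct inspection. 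Apart from this small-$n$ bookkeeping, the argument is a clean assembly of the lemmas established above.
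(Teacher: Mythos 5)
Your proposal is correct and takes essentially the same route as the paper's proof: the paper likewise combines bifurcatization (Lemma \ref{thm:transform_more_LH}), pendant-pruning (Lemma \ref{thm: bi_equal}), Theorem \ref{thm: hammersley}, and the monotonicity of Proposition \ref{prop:monotonic}, merely organizing the chain as a three-way case split (strictly bifurcating / bifurcatable but not strictly bifurcating / non-bifurcatable) rather than as your sequential pipeline. The low-$n$ bookkeeping you flag is a genuine subtlety that the paper's own proof does not address---its case (ii) asserts $N(\hat{T}_{n^*}) < N(\hat{T}_n)$ via Proposition \ref{prop:monotonic} even when $n^* \leq 2$, where $N(U_2^*) = N(U_3^*) = 1$ makes the inequality non-strict (e.g., the $3$-leaf star with $r \geq 3$ ties the $3$-leaf caterpillar)---so raising it counts in your favor rather than as a gap relative to the paper.
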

%%%%%%%%%%%
\begin{proof}
For $n=1$ and $n=2$, the claim is trivial, as there is only one possible tree shape. For $n \geq 3$, let $\hat{T}_n=U_n^*$ be the unique maximally probable bifurcating tree with $n$ leaves, as defined by Theorem \ref{thm: hammersley}. Consider an at-most-$r$-furcating tree, $T$, with $n$ leaves. There are three cases: (i) $T$ is a strictly bifurcating tree, (ii) $T$ is bifurcatable but not strictly bifurcating, and (iii) $T$ is non-bifurcatable. 

(i) If $T$ is strictly bifurcating, then it is either the maximally probable strictly bifurcating tree, in which case $N(T) = N(\hat{T}_n)$, or it is not the maximally probable strictly bifurcating tree, in which case $N(T) < N(\hat{T}_n)$ by the uniqueness of the maximally probable strictly bifurcating tree.

(ii) If $T$ is bifurcatable and not strictly bifurcating, then we apply pendant-pruning to produce $T^*$, a bifurcating tree with $n^* < n$ leaves. By Lemma \ref{thm: bi_equal}, we have $N(T) = N(T^*)$, and by the monotonicity in Proposition \ref{prop:monotonic}, $N(T^*) \leq N(\hat{T}_{n^*}) < N(\hat{T}_n)$, so that $N(T) < N(\hat{T}_n)$.

(iii) If $T$ is non-bifurcatable, then we sequentially apply bifurcatization until a bifurcatable tree $T'$ is reached. By Lemma \ref{thm:transform_more_LH}, $N(T) < N(T')$. $T'$ is strictly bifurcating or it is bifurcatable but not strictly bifurcating, so that case (i) or (ii) applies to $T'$ and $N(T') \leq N(\hat{T}_n)$. Because $N(T) < N(T')$, we conclude $N(T) < N(\hat{T}_n)$. 
     
We conclude that $N(T) \leq N(\hat{T}_n)$, with equality if and only if $T=\hat{T}_n$.
\end{proof}

%%%%%%%%%%%%%%%% Section 3.2 %%%%%%%%%%%%%%%%%%%%%%%%%%%%%%%%%%%
\subsection{At-most-$r$-furcating trees, simultaneous branching}
\label{sec:sim}

%%%%%%%%%%%%%%%% Section 3.2.1 %%%%%%%%%%%%%%%%%%%%%%%%%%%%%%%%%
\subsubsection{Total number of labeled histories}

Let $S_r(n)$ denote the total number of tie-permitting labeled histories across all labeled topologies with $n$ leaves. Trivially, $S_r(1) = 1$. We count labeled histories by proceeding backward in time from the $n$ lineages. We choose groups of lineages to coalesce simultaneously in the first ``event.'' Each such group has a number of lineages in $\{2, 3, \ldots, r\}$ . Specifically, let $x_i$ be the number of groups of size $i$, $2 \leq i \leq r$, that coalesce in this event. We must have $2 \leq 2x_2 + 3x_3 + \ldots +r x_r \leq n$; that is, at least 2 and at most $n$ lineages coalesce. With fixed $x_2, x_3, \ldots, x_r$, without loss of generality, the groups are selected in ascending size. For groups of size $i$, there are ${n - \sum_{j=2}^{i-1} j x_j \choose i}$ choices for the first group, ${n - (\sum_{j=2}^{i-1} j x_j ) - i \choose i}$ for the second group, and so on, with ${n - (\sum_{j=2}^{i-1} j x_j ) - i(x_i - 1) \choose i}$ for the $x_i$th group. The same $x_i$ groups can be chosen in any of $x_i!$ orders. We obtain
\begin{align*} 
& \bigg[ \frac{1}{x_2!} {n \choose 2} {n-2 \choose 2} \times \cdots \times {n-2(x_2-1) \choose 2} \bigg] \nonumber \\
& \quad \times \bigg[ \frac{1}{x_3!} {n - 2x_2 \choose 3} {n-2x_2 - 3 \choose 3} \times \cdots \times {n-2x_2-3(x_3-1) \choose 3} \bigg] \nonumber \\
& \quad \times \cdots \times \bigg[ \frac{1}{x_r!} {n - 2x_2 - 3x_3 - \ldots - (r-1)x_{r-1} \choose r} {n - 2x_2 - 3x_3 - \ldots - (r-1)x_{r-1} - r \choose r}  \nonumber \\
& \quad \times \cdots \times {n - 2x_2 - 3x_3 - \ldots - (r-1)x_{r-1} - r(x_r-1) \choose r} \bigg] \nonumber \\
& = 
\frac{1}{x_2! \, x_3! \cdots x_r!} \frac{n!}{(2!)^{x_2} (3!)^{x_3} \cdots (r!)^{x_r} \big(n- 2x_2 - 3x_3 - \ldots - (r-1)x_{r-1} - rx_r \big)!}.
\end{align*}
Once all groups of all sizes coalesce in this event, we are left with $n- \sum_{j=2}^r (j-1) x_j$ lineages. The number of ways in which these remaining lineages can coalesce is $S_r \big(n- \sum_{j=2}^r (j-1)x_j \big)$.
%%%%%%%%%%%% Proposition 11
\begin{prop} \label{eq:arb_at_most_r_sim}
    Permitting simultaneous at-most-$r$-furcations, the total number of labeled histories on $n$ leaves, $S_r(n)$, satisfies $S_r(1) =1$, and for $n \geq 2$,
    \begin{align*}
        S_r(n) & = {\sum}_{\{ (x_2, x_3, \ldots, x_r) \, : \,  2 \leq \sum_{j=2}^r jx_j \leq n \} }   
        \frac{n!}{\big[ \prod_{j=2}^r (j!)^{x_j} (x_j!) \big]
(n- \sum_{j=2}^r jx_j)!} S_r\bigg(n - \sum_{j=2}^r (j-1)x_j\bigg).        
    \end{align*}
\end{prop}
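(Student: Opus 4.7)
The plan is to establish the recursion by conditioning on the most recent event in a tie-permitting labeled history, reading backward from the $n$ leaves toward the root. For each admissible tuple $(x_2, x_3, \ldots, x_r)$, with $x_j$ recording the number of $j$-way simultaneous mergers in that first event, the constraint $2 \leq \sum_{j=2}^r jx_j \leq n$ captures that the event involves at least one merger and consumes no more lineages than exist. Summing over such tuples partitions the set of labeled histories into disjoint classes indexed by their most recent event, so no overcounting occurs, and the proof will be by induction on $n$ with base case $S_r(1)=1$ immediate.

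For a fixed tuple, I would factor the count as a product of two pieces: (a) the number of ways to choose which lineages form which groups in the first event, and (b) the number of tie-permitting labeled histories on the $n - \sum_{j=2}^r (j-1)x_j$ lineages that survive the event, which by the inductive hypothesis equals $S_r\bigl(n - \sum_{j=2}^r (j-1)x_j\bigr)$. For (a), I would sequentially select the $x_2$ pairs, then the $x_3$ triples, and so on through the $x_r$ groups of size $r$, drawing each class from the lineages remaining after previous classes have been chosen. Within a single size class of groups of size $j$ I would divide by $x_j!$ because the $x_j$ groups of that size are interchangeable, whereas the internal ordering within a single group of size $j$ is already quotiented out by the binomial coefficient $\binom{\cdot}{j}$. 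The resulting product of binomial coefficients telescopes, class by class, to $n!\bigm/\bigl[\prod_{j=2}^r (j!)^{x_j}(x_j!)\bigr]\bigl(n - \sum_{j=2}^r jx_j\bigr)!$, which is exactly the multinomial-like factor in the proposition.

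The main step requiring care is the telescoping in (a): consecutive layers of factorial ratios must cancel to yield the compact denominator, and I must confirm that the symmetry corrections $1/x_j!$ are applied exactly once per size class without double-correcting for permutations inside a single group. With the recursive factor $S_r\bigl(n - \sum_{j=2}^r (j-1)x_j\bigr)$ accounting for the remaining history above the first event, summing over admissible tuples yields the displayed formula. The structure of the argument parallels Proposition~\ref{eq: num_at_most_r} in the non-simultaneous setting; the essential new ingredient is the combinatorics of picking several coalescing groups at once, while in the non-simultaneous case only a single group, of a single size $i$, is chosen at each step.
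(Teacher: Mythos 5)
Your proposal is correct and follows essentially the same route as the paper: both condition on the first event going backward in time, count the group selections by a telescoping product of binomial coefficients divided by $x_j!$ for each size class (yielding the factor $n!/\big[\prod_{j=2}^r (j!)^{x_j}(x_j!)\big]\big(n-\sum_{j=2}^r jx_j\big)!$), and then recurse on the $n - \sum_{j=2}^r (j-1)x_j$ surviving lineages. No gaps; the paper's argument is the same computation with the groups likewise chosen in ascending order of size.
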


%%%%%%%%%%%%%%%%%% TABLE 2 %%%%%%%%%%%%%%%%%%%%%
\begin{table}[tb]
\centering
\small
\begin{tabular}{|c | c | c | c| c | c|}
\hline
      & \multicolumn{5}{c|}{$r$} \\ \cline{2-6}
$n$ & 2 & 3 & 4 & 5 & 6 \\ \hline
1   &              1 &              1 &              1 &              1 &              1 \\ 
2   &              1 &              1 &              1 &              1 &              1 \\ 
3   &              3 &              4 &              4 &              4 &              4 \\ 
4   &             21 &             31 &             32 &             32 &             32 \\ 
5   &            255 &            420 &            435 &            436 &            436 \\ 
6   &          4,815 &          8,625 &          8,990 &          9,011 &          9,012 \\
7   &        130,095 &        250,390 &        262,045 &        262,731 &        262,759 \\ 
8   &      4,763,115 &      9,755,865 &     10,241,105 &     10,269,469 &     10,270,659 \\ 
9   &    226,955,925 &    491,081,920 &    516,730,165 &    518,213,374 &    518,275,576 \\ 
10  & 13,646,570,175 & 31,014,100,075 & 32,695,631,150 & 32,791,782,443 & 32,795,799,470 \\ \hline
\end{tabular}
\caption{The total number of labeled histories, allowing simultaneity, for an at-most-$r$-furcating tree with $n$ leaves, $S_r(n)$, as obtained by Proposition \ref{eq:arb_at_most_r_sim}. The $r=2$ column, $S_2(n)$, corresponds to OEIS sequence A317059. The diagonal $S_n(n)$ is A005121. As in Table \ref{tab:r_nonsim}, for $r\geq n$, $S_r(n) = S_n(n)$.}
\label{tab:r_sim}
\end{table}
%%%%%%%%%%%%%%%%%%%%%%%%%%%%%%%%%%%%%%%%%%%%%%%%%%%%%%%%

In the proposition, the sum traverses the partitions of $n$ that only contain entries of $2, 3, \ldots, r$ (OEIS A002865 in the calculation of $S_n(n)$). Table \ref{tab:r_sim} gives the values of $S_r(n)$ for small values of $n$ and $r$. We observe that $S_2(n)$ reduces to the number of labeled histories for bifurcating trees with simultaneity, Proposition 5 from \cite{Dickey25}: $x_2$ is the only nonzero element in $(x_2, x_3, \ldots, x_r$), so that the sum becomes 
$$S_2(n)=\sum_{x_2=1}^{\lfloor n/2 \rfloor } \frac{n!}{2^{x_2} \, x_2! \, (n-2x_2)!} S_2(n-x_2).$$

%%%%%%%%%%%%%%%% Section 3.2.2 %%%%%%%%%%%%%%%%%%%%%%%%%%%%%%%%%
\subsubsection{Number of labeled histories for a specific topology}

Next, we count the labeled histories for a labeled topology $T$ of an at-most-$r$-furcating tree allowing simultaneity, extending Theorem 15 from \cite{Dickey25}. The proof follows closely.

Write $E(T, z)$ for the number of tie-permitting labeled histories of labeled topology $T$ with $z$ events. For labeled topology $T$, the maximal number of events is the number of internal nodes $w(T)$, if each internal node occurs at a distinct time point. The minimal number of permissible events is $\delta(T)$, the height of tree $T$. Hence, for labeled topology, $T$, the number of events, $z$, must satisfy
$$\delta(T) \leq z \leq w(T).$$

Consider an at-most-$r$-furcating tree $T$. If $T$ has $r$ non-empty subtrees of the root, then there exist $2^r - 1$ possible (non-empty) sets of subtrees. Each of these sets provides a possible  ``event type'': a set of subtrees that can be associated with an event as the subtrees in which simultaneous nodes associated with that event occur. We encode event type $k$, $1 \leq k \leq 2^r - 1$, in binary, with $r$ digits. Reading left to right, the $j$th digit of the binary representation of $k$, $1 \leq j \leq r$, indicates if an event occurs in subtree $j$. An event of type $k$ possesses simultaneous at-most-$r$-furcations in all subtrees for which the binary representation of $k$ has a 1. 

The $z-1$ non-root events each have a type among $1, 2, \ldots, 2^r -1$. Each labeled history has a ``simultaneity configuration,'' a vector that counts the numbers of events of the different types. We write the simultaneity configuration $c^*=(c_1-1, c_2-1, \ldots, c_{2^r-1}-1)$, where $c_k-1$ counts events of type $k$. It is convenient to use $c_k-1$ rather than $c_k$ to count events of type $k$, as the vector $c=(c_1,c_2,\ldots, c_{2^r-1})$ is then a composition of $(z-1)+(2^r-1)$ into $2^r-1$ ordered, positive integer parts.

Write $I(c,j) = \sum_{k=1}^{2^r-1} c_k^* f(k,j)$, where $f(k,j)=1$ if the $r$-digit binary representation of integer $k$ has a 1 in position $j$. $I(c,j)$ counts internal nodes of subtree $j$ for a simultaneity configuration $c^*$ encoded by composition $c$. For a simultaneity configuration $c^*$ that has specified numbers of events $a_1, a_2, \ldots, a_r$ in subtrees $1, 2, \ldots, r$, the number of labeled histories for subtree $j$ is $E(T_j, a_j)$.

If the at-most-$r$-furcating tree $T$ possesses only $b$ immediate subtrees of the root, $2 \leq b < r$, then we view the tree as having $r-b$ empty subtrees; the permissible event types $k$ are only those integers $1 \leq k < 2^r-1$ whose binary representations have a 0 in the last $r-b$ positions, so that for each composition $c$ and associated simultaneity configuration $c^*$, $I(c,j)=0$ for $b+1 \leq j \leq r$. The number of labeled histories can be written using the same recursion as for the strictly $r$-furcating case, with $w(T)$ providing a general expression for the count of internal nodes. Sums over $a_j$ with $b+1 \leq j \leq r$ collapse to $a_j=0$ and the sum over $c$ requires consideration only of compositions with $(c_{b+1}^*, c_{b+2}^*, \ldots, c_r^*)=(0,0,\ldots,0)$.

%%%%%%%%%%%% Propositon 12
\begin{prop} 
\label{eq:sim_for_at_most_r}
Permitting simultaneous at-most-$r$-furcations, the number of labeled histories for a labeled topology $T$ with $n$ leaves, $N(T)$,
satisfies
$$N(T) = \sum_{z = \delta(T)}^{w(T)} E(T,z).$$
The number of tie-permitting labeled histories $E(T,z)$ satisfies
\begin{enumerate}[label=(\roman*)]
    \item If $T$ has 1 leaf or is empty, then $E(T,0)=1$ and $E(T,z)=0$ for $z \neq 0$.
    \item If $|T_1| \geq 1$ and $|T_2| \geq 1$ and $\max_{1 \leq j \leq r} |T_j| = 1$, then $E(T,1) = 1$ and $E(T,z) = 0$ for $z \not = 1$.
    \item If $\max_{1 \leq j \leq r} |T_j| > 1$, then 

\begin{align*}
    E (T, z) &= \sum_{a_1 = \delta(T_1)}^{\min \big(w(T_1), z-1 \big)} \sum_{a_2 = \delta(T_2)}^{\min \big(w(T_2), z-1 \big)} \ldots \sum_{a_r = \delta(T_r)}^{\min \big(w(T_r), z-1 \big)} \sum_{c \in C(z+2^r - 2, 2^r - 1)} \prod_{j=1}^r \llbracket I(c, j) = a_j\rrbracket \nonumber \\
& \qquad\times \bigg( \prod_{j=1}^r E(T_j,a_j) \bigg) {z-1 \choose c_1^*, c_2^*, \ldots, c_{2^r-1}^* }.
\end{align*}
\end{enumerate}
\end{prop}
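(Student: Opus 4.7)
The plan is to verify the base cases by inspection and to establish the recursion in (iii) via a decomposition argument closely parallel to the proof of Theorem 15 in \cite{Dickey25}. Cases (i) and (ii) are immediate: an empty tree or a single leaf has no internal nodes and hence exactly one labeled history, with zero events; a tree all of whose non-empty root subtrees are leaves has only the root as an internal node, contributing one labeled history with one event.

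For case (iii), the core observation is that because the root is ancestral to every other internal node, it must occupy an event alone, and this event is necessarily the last, numbered $z$. Each labeled history of $T$ with $z$ events thus produces exactly $z-1$ non-root events. I would assign to each non-root event $e$ a \emph{type} $k \in \{1,\ldots,2^r-1\}$ whose $r$-digit binary expansion has a $1$ in position $j$ precisely when $e$ contains at least one internal node of $T_j$; the type is well defined and non-zero because every non-root internal node lies in some root subtree and every non-root event is non-empty. The vector $c^* = (c_1^*,\ldots,c_{2^r-1}^*)$ counting non-root events by type satisfies $\sum_k c_k^* = z-1$, and ranging $c^*$ over all such tuples is equivalent to summing over compositions $c \in C(z+2^r-2,\, 2^r-1)$ via $c_k = c_k^* + 1$. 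The number of non-root events touching subtree $T_j$ equals $a_j = \sum_k c_k^* f(k,j) = I(c,j)$, and restricting the global history to $T_j$ produces a labeled history of $T_j$ with exactly $a_j$ events, forcing $\delta(T_j) \leq a_j \leq \min(w(T_j), z-1)$.

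The principal step is a bijection between, on one side, labeled histories of $T$ with $z$ events and simultaneity configuration $c^*$, and on the other side, triples comprising (a) an arrangement of types along the $z-1$ non-root event slots realizing $c^*$, and (b) a labeled history of $T_j$ with $a_j$ events for each $j$. Item (a) contributes the multinomial coefficient ${z-1 \choose c_1^*, \ldots, c_{2^r-1}^*}$, and item (b) contributes $\prod_{j=1}^r E(T_j, a_j)$. The forward direction reads off types from the temporal order of events and restricts the global history to each subtree; the reverse direction interleaves the subtree histories along the type sequence, placing the nodes of $T_j$ that make up its $m$-th $T_j$-event into the $m$-th slot whose type has a $1$ in position $j$, then placing the root alone at slot $z$. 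The main technical checkpoint is verifying the reverse direction: any such interleaving yields a valid labeled history of $T$ because nodes from distinct root subtrees are incomparable under the ancestor relation, so the descendant-precedence condition is inherited from the $T_j$-histories combined with the terminal placement of the root. Enforcing $a_j = I(c,j)$ via the Iverson bracket and summing over all permitted $a_j$ and $c$ then yields the stated formula for $E(T,z)$, and $N(T) = \sum_{z=\delta(T)}^{w(T)} E(T,z)$ because every labeled history realizes some number of events in this range.

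If $T$ possesses only $b < r$ non-empty root subtrees, the same formula applies with $T_{b+1} = \ldots = T_r = \emptyset$, $\delta(\emptyset) = w(\emptyset) = 0$, and $E(\emptyset, 0) = 1$: summands with $a_j \neq 0$ for any $j > b$ vanish, which in turn forces $c_k^* = 0$ for every type $k$ whose binary expansion has a $1$ beyond position $b$, matching the setup described in the paragraph preceding the proposition.
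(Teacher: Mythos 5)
Your proposal is correct and follows essentially the same route as the paper: the paper sets up the event-type encoding, simultaneity configurations $c^*$, compositions $c$, and the quantities $I(c,j)$ in the text preceding the proposition and then defers to the proof of Theorem 15 of \cite{Dickey25}, which is precisely the decomposition you carry out (root alone in the final event, types read off the $z-1$ non-root events, multinomial arrangement of types times the product of subtree counts, and the reduction of $b<r$ subtrees to empty subtrees). Your write-up simply makes the underlying bijection and its inverse explicit, which the paper leaves implicit.
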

\noindent The Iverson bracket $\llbracket \cdot \rrbracket$ equals 1 if its statement holds and is 0 otherwise.  

%%%%%%%%%%%%%%%%%%%%%%%%%%%%%%%%%%%% Section 3.2.3 %%%%%%%%%%%%%%%%%%%%%%%%%%%%
\subsubsection{Maximally probable at-most-$r$-furcating tree with simultaneity}
\label{sec:max-at-most-r-simul}

Here, we reduce the problem of identifying the maximally probable at-most-$r$-furcating tree shape with simultaneity to that of determining the maximally probable bifurcating tree shape with simultaneity. With simultaneity, a maximally probable tree is a topology, $T$, that maximizes $N(T)$. A topology can also maximize $E(T, z)$, the number of labeled histories with a fixed number of leaves $n$ and events $z$. We follow similar logic to our corresponding result in Section \ref{sec: max_prob_at_most} without simultaneity. 

%%%%%% Lemma 13
\begin{lemma}\label{thm:sim_transformation}
    Consider a non-bifurcatable at-most-$r$-furcating tree $T$, and suppose $T'$ is obtained from $T$ by bifurcatization. Then, allowing simultaneity, $N(T') > N(T)$.
\end{lemma}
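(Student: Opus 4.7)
The plan is to adapt the injection argument from the non-simultaneous proof of Lemma~\ref{thm:transform_more_LH} to the tie-permitting setting. As in that lemma, fix an internal node $v$ of $T$ with $m \geq 3$ non-leaf subtrees rooted at $c_1, c_2, \ldots, c_m$, and let $k$ denote the new internal node of $T' = \mathcal{B}_v(T)$ whose children are $c_1$ and $c_2$. I will construct an explicit injection $\Phi$ from the set of tie-permitting labeled histories of $T$ into that of $T'$, and then exhibit a tie-permitting labeled history of $T'$ that lies outside the image of $\Phi$.

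For the injection, given a tie-permitting labeled history $f \colon V^0(T) \to \{1, \ldots, z\}$ of $T$, set $M = \max\{f(c_1), f(c_2)\}$ and define $\Phi(f) = f'$ on $V^0(T') = V^0(T) \cup \{k\}$ by $f'(u) = f(u)$ when $u \neq k$ and $f(u) \leq M$, $f'(u) = f(u) + 1$ when $u \neq k$ and $f(u) > M$, and $f'(k) = M + 1$. Intuitively, $\Phi$ inserts a fresh event containing only $k$ directly above the later of the events containing $c_1$ and $c_2$. Validity of $f'$ as a tie-permitting labeled history follows from $f(v) > M$, which yields $f'(v) = f(v) + 1 > M + 1 = f'(k)$, together with $f'(c_i) \leq M < f'(k)$ for $i = 1, 2$; all other ancestral inequalities are preserved because $f \mapsto f'$ shifts pairs of nodes uniformly, and $f'$ is a surjection onto $\{1, \ldots, z + 1\}$. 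Injectivity of $\Phi$ follows because $k$ is the unique node at event $M + 1$ in $\Phi(f)$---any $u \neq k$ with $f(u) \leq M$ lands at $f(u) \leq M$, while any $u$ with $f(u) > M$ lands at $f(u) + 1 \geq M + 2$---so $f$ can be recovered from $\Phi(f)$ by identifying the singleton $k$-event, removing $k$, and shifting labels above $M$ down by one. This already establishes $N(T') \geq N(T)$.

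To obtain strict inequality, I exhibit a tie-permitting labeled history of $T'$ that cannot arise as $\Phi(g)$ for any $g$. Because $c_1, c_2, c_3$ are sibling children of $v$ in $T$ with no mutual descendance constraints, there exists a tie-permitting labeled history $f$ of $T$ with $f(c_3) > \max\{f(c_1), f(c_2)\}$. Define $f^*$ on $V^0(T')$ by $f^*(u) = f(u)$ for $u \neq k$ and $f^*(k) = f(c_3)$. All ancestral constraints hold---in particular $f^*(c_1), f^*(c_2) \leq M < f(c_3) = f^*(k) < f(v) = f^*(v)$---so $f^*$ is a tie-permitting labeled history of $T'$. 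But $k$ and $c_3$ share an event in $f^*$, whereas every $\Phi(g)$ has $k$ alone in its event, so $f^*$ is not in the image of $\Phi$. Hence $N(T') > N(T)$.

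The main obstacle is the bookkeeping needed to verify that the shift defining $\Phi$ simultaneously preserves surjectivity onto a prefix of the positive integers and every descendant--ancestor inequality, particularly when events of $f$ contain nodes drawn from disparate subtrees of $T$; this is resolved cleanly by the uniform shift of all labels greater than $M$ by $+1$. A secondary subtlety is confirming existence of the witness $f$, which is immediate because $c_1, c_2, c_3$ are sibling internal nodes in $T$ whose relative event orderings may be chosen freely.
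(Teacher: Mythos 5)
Your proof is correct and follows essentially the same strategy as the paper's (which adapts the injection argument of Lemma~\ref{thm:transform_more_LH} by inserting the new node $k$ as an added singleton event and then exploiting the freedom to place $k$ in an existing event to get strictness). The only cosmetic differences are the insertion point of the new event ($M+1$ rather than $f(v)$) and that you exhibit one history of $T'$ outside the image of the injection, whereas the paper exhibits one history of $T$ with two images; both yield $N(T') > N(T)$.
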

%%%%%%

We omit the proof of this result, which is almost entirely identical to that of Lemma \ref{thm:transform_more_LH}. The statement of this result and its proof follow those of Lemma 6 with two slight changes. First, in the construction of internal node $k$ in the bifurcatization, we also increase the number of events by 1. Second, in the step that shows that the inequality $N(T') > N(T)$ is strict, we begin with a tie-permitting labeled history of $T$ in which $c_1, c_2, \ldots, c_j$ are all in the same event. With node $k$ and the added event introduced, one tie-permitting labeled history of $T'$ places $c_3$ and $k$ in the same event and $c_1, c_2$ in a separate event, and another places $k$ as its own event and keeps $c_1, c_2, c_3$ together in a separate event. Note that this construction also verifies that $E(T, z) < E(T', z+1)$.

%%%%%%%%%%%% Theorem 14
\begin{theorem}
\label{thm:sim_max_bi}
    For $r \geq 2$, and $n \geq 1$, an at-most-$r$-furcating tree shape with the largest number of tie-permitting labeled histories among at-most-$r$-furcating tree shapes with $n$ leaves is a strictly bifurcating tree.
\end{theorem}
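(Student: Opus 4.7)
The plan is to mirror the proof of Theorem \ref{thm:bi_max_prob}, replacing Lemma \ref{thm:transform_more_LH} by Lemma \ref{thm:sim_transformation} for the non-bifurcatable case, and establishing tie-permitting analogs of Lemma \ref{thm: bi_equal} (pendant-pruning preserves $N$) and of Proposition \ref{prop:monotonic} (monotonicity of the maximally probable strictly bifurcating count in $n$). The goal is to show that for any at-most-$r$-furcating tree $T$ on $n$ leaves, there exists a strictly bifurcating tree on $n$ leaves with at least as many tie-permitting labeled histories as $T$.

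Both auxiliary facts are direct. A tie-permitting labeled history is an ancestry-respecting surjection from the internal nodes to event indices, and pendant-pruning removes only leaves, so neither the internal nodes nor their ancestry relations are altered. Hence if $T$ is bifurcatable, pendant-pruning $T$ at each internal node yields a strictly bifurcating $T^*$ on $n^* \leq n$ leaves with $N(T^*) = N(T)$. For monotonicity, let $\hat{T}_n$ denote any strictly bifurcating tree on $n$ leaves that maximizes $N$ under simultaneity (existence is clear even though the shape is only conjectured). I would construct $T_n'$ from $\hat{T}_{n-1}$ by replacing one of its leaves with a cherry (attaching a new internal node $k$ and a new leaf); given a tie-permitting labeled history $f$ of $\hat{T}_{n-1}$ with $z$ events, set $f'(k) = 1$ and $f'(v) = f(v) + 1$ for all other internal nodes $v$, producing a valid tie-permitting labeled history of $T_n'$ with $z+1$ events. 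This injection of labeled histories gives $N(\hat{T}_n) \geq N(T_n') \geq N(\hat{T}_{n-1})$, and iterating yields $N(\hat{T}_{n^*}) \leq N(\hat{T}_n)$ for all $n^* \leq n$.

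The main argument then handles $n \leq 2$ trivially and, for $n \geq 3$, considers an arbitrary at-most-$r$-furcating tree $T$ on $n$ leaves in three cases. If $T$ is strictly bifurcating, $N(T) \leq N(\hat{T}_n)$ by definition of $\hat{T}_n$. If $T$ is bifurcatable but not strictly bifurcating, pendant-prune to obtain a strictly bifurcating $T^*$ on some $n^* < n$ leaves; then $N(T) = N(T^*) \leq N(\hat{T}_{n^*}) \leq N(\hat{T}_n)$ by the two auxiliary facts. If $T$ is non-bifurcatable, iteratively apply bifurcatization; by Lemma \ref{thm:sim_transformation}, each step strictly increases $N$, producing in finitely many steps a bifurcatable $T'$ on $n$ leaves with $N(T) < N(T')$, after which applying the first two cases to $T'$ yields $N(T) < N(T') \leq N(\hat{T}_n)$.

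The main subtlety is the monotonicity step, since the shape of the maximally probable strictly bifurcating tree under simultaneity is only conjectured. However, unlike the uniqueness-style proof of Theorem \ref{thm:bi_max_prob}, which required strict monotonicity of $N(U_n^*)$, here only non-strict monotonicity of $N(\hat{T}_n)$ is needed (the theorem's conclusion is existential rather than a uniqueness claim, consistent with small-$n$ cases where multifurcating maximizers can tie with strictly bifurcating ones). The cherry-attachment injection supplies this non-strict monotonicity without any reference to the specific shape of $\hat{T}_n$, so the monotonicity obstacle dissolves.
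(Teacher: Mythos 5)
Your proof is correct and follows the same skeleton as the paper's: the paper proves this theorem by repeating the three-case argument of Theorem \ref{thm:bi_max_prob} verbatim, substituting Lemma \ref{thm:sim_transformation} for Lemma \ref{thm:transform_more_LH} and letting $\hat{T}_n$ be an arbitrary (possibly non-unique) bifurcating maximizer. Where you go beyond the paper is in explicitly supplying the two auxiliary facts that the phrase ``the proof follows with two slight changes'' quietly presupposes. The tie-permitting version of Lemma \ref{thm: bi_equal} is indeed immediate from the observation that a tie-permitting labeled history depends only on the ancestry poset of internal nodes, which pendant-pruning leaves intact. More substantively, the monotonicity step is a genuine issue: the paper's Proposition \ref{prop:monotonic} is proved via the explicit Hammersley--Grimmett shape and the non-simultaneous recursion \eqref{eq: LH_at_most_r}, neither of which is available with ties (the maximizing shape is only conjectured there). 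Your cherry-attachment injection --- grafting a cherry onto a leaf of $\hat{T}_{n-1}$, assigning the new node to a fresh earliest event, and shifting all other event indices up by one --- gives the non-strict inequality $N(\hat{T}_{n-1}) \leq N(\hat{T}_n)$ without any knowledge of the maximizer's shape, and non-strict monotonicity is all that case (ii) requires since the theorem makes only an existence claim. This is exactly the right repair, and it makes the argument self-contained where the paper's version leaves a gap for the reader to fill.
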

%%%%%%%%%%%%

The statement of this result and its proof follow those of Theorem \ref{thm:bi_max_prob} with two slight changes. First, for the proof, Lemma \ref{thm:sim_transformation} is used in place of Lemma \ref{thm:transform_more_LH}. Second, the statement of the result recognizes that we have not yet established that a unique at-most-$r$-furcating tree shape has the largest number of tie-permitting labeled histories. In the proof, if there are multiple strictly bifurcating tree shapes with the largest number of tie-permitting labeled histories, then choose one of them arbitrarily for the role of $\hat{T}_n$ in the proof.

We have therefore reduced the problem of finding the $n$-leaf at-most-$r$-furcating unlabeled topology whose labelings have the largest number of tie-permitting labeled histories to the problem of finding the $n$-leaf bifurcating unlabeled topology (or topologies) whose labelings have the largest number of tie-permitting labeled histories. We have not characterized this bifurcating unlabeled topology, but we can state a conjecture. For $2 \leq n \leq 21$, with $z$ ranging in $\lceil \log_2(n) \rceil \leq z \leq n-1$, Tables \ref{tab:conjecture_bi_sim} and \ref{tab:conjecture_bi_sim_2} report the maximal number of tie-permitting labeled histories with $z$ events across all bifurcating unlabeled topologies with $n$ leaves. For $n=1$, one labeled history occurs with $z=0$, and 0 labeled histories occur for all other $z$. 

In all cases of $(n,z)$ in the tables, the unlabeled topology in Theorem \ref{thm: hammersley} produces the maximal number of tie-permitting labeled histories. For some $(n,z)$, we find that multiple unlabeled topologies share this maximal number. However, with fixed $n$, this sharing of the maximum does not occur at all permissible $z$. We are led to the following conjectures.
%%%%%%%%%%% Conjecture 15
\begin{conjecture} 
\label{conj:max_bi_sim}
Consider the set of rooted bifurcating unlabeled topologies with $n$ leaves, permitting simultaneous bifurcations. 
\begin{enumerate}[label=(\roman*)]
    \item The unlabeled topology whose labelings have the largest number of tie-permitting labeled histories takes the form in Theorem \ref{thm: hammersley}. This topology is unique in having the maximal value.
    \item Further, this same unlabeled topology has the largest number of tie-permitting labeled histories with exactly $z$ events, $\lceil \log_2 n \rceil \leq z \leq n-1$. This topology is not necessarily unique in having the maximal value.
\end{enumerate}
\end{conjecture}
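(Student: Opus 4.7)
The plan is to reduce part~(i) of Conjecture~\ref{conj:max_bi_sim} to part~(ii) and then attack (ii) by induction on $n$. For the reduction, observe that at the extreme layer $z=n-1$, a bifurcating tree on $n$ leaves has exactly $n-1$ internal nodes, so each event must contain a single node; thus $E(T,n-1)$ coincides with the non-simultaneous labeled-history count of $T$. Theorem~\ref{thm: hammersley} then provides strict inequality $E(U_n^*,n-1)>E(T,n-1)$ for every bifurcating $T\neq U_n^*$ with $n\geq 3$. Hence, once (ii) is established as a weak inequality $E(U_n^*,z)\geq E(T,z)$ at every admissible $z$, summing over $z$ and invoking strictness at $z=n-1$ yields the uniqueness in (i).

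For~(ii), I would induct on $n$, using the base cases visible in Tables~\ref{tab:conjecture_bi_sim} and~\ref{tab:conjecture_bi_sim_2}. For the inductive step, let $T=T_1\oplus T_2$ be an arbitrary bifurcating topology on $n$ leaves with $|T_1|=k$ and $|T_2|=n-k$. Specializing Proposition~\ref{eq:sim_for_at_most_r} to $r=2$, the three event types correspond to a node occurring only in $T_1$, only in $T_2$, or simultaneously in both, yielding
\begin{equation*}
E(T,z)\;=\;\sum_{\substack{c_1^*,c_2^*,c_3^*\geq 0\\ c_1^*+c_2^*+c_3^*=z-1}}\binom{z-1}{c_1^*,\,c_2^*,\,c_3^*}\,E(T_1,c_1^*+c_3^*)\,E(T_2,c_2^*+c_3^*).
\end{equation*}
By the inductive hypothesis applied term-by-term, $E(T_j,a_j)\leq E(U_{|T_j|}^*,a_j)$ for every permissible $a_j$, so $E(T,z)\leq E(U_k^*\oplus U_{n-k}^*,z)$. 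The problem then reduces to showing that among the splits $(k,n-k)$, the split $(t,n-t)$ prescribed by Theorem~\ref{thm: hammersley} maximizes $E(U_k^*\oplus U_{n-k}^*,z)$ at every admissible $z$.

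This last maximization is the main obstacle. The classical Hammersley--Grimmett argument operates only on the total $N(T)=\sum_z E(T,z)$, and its power-of-two splitting identities do not obviously descend to each layer $z$ separately, because the multinomial weights $\binom{z-1}{c_1^*,c_2^*,c_3^*}$ depend on $z$. A split close to the Hammersley value could conceivably dominate its competitors only after averaging over $z$, so a more delicate layerwise invariant is required. Two attack lines appear promising: first, a layerwise log-concavity approach aiming to show that the profile $\bigl(E(U_k^*,a)\bigr)_a$ is well-behaved enough in $k$ that every convolution appearing in the display is maximized simultaneously at $k=t$, with the key lemma being a pointwise refinement of Proposition~\ref{prop:monotonic} controlling ratios such as $E(U_{k+1}^*,a)/E(U_k^*,a)$ uniformly in $a$; second, a bijective approach adapting the Hammersley injections between competing bifurcating shapes so that event membership is tracked in addition to temporal order. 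The boundary layers $z=\lceil\log_2 n\rceil$ and $z=n-1$ are natural anchors for the induction, the latter being controlled by Theorem~\ref{thm: hammersley} and the former being constrained by the nested power-of-two structure of $U_n^*$.
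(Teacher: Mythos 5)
The statement you are addressing is presented in the paper as a \emph{conjecture}: the paper offers no proof, only computational evidence for $2 \leq n \leq 21$ (Tables \ref{tab:conjecture_bi_sim} and \ref{tab:conjecture_bi_sim_2}) together with the remark that part (ii) plus Theorem \ref{thm: hammersley} implies part (i). Your reduction of (i) to (ii) is exactly that remark: at $z=n-1$ every event is a singleton, $E(T,n-1)$ equals the non-simultaneous count $N(T)$, and Theorem \ref{thm: hammersley} gives strict uniqueness at that layer, so weak layerwise dominance plus strictness at $z=n-1$ yields (i). That part of your argument is correct and matches the paper. Your specialization of Proposition \ref{eq:sim_for_at_most_r} to $r=2$ as a convolution over compositions $(c_1^*,c_2^*,c_3^*)$ of $z-1$ is also correct, and the term-by-term application of the inductive hypothesis (all terms being nonnegative) validly reduces the problem to comparing root splits $E(U_k^* \oplus U_{n-k}^*, z)$ across $k$.

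However, what remains is precisely the content of the conjecture, and you have not closed it. The claim that the Hammersley split $(t, n-t)$ maximizes $E(U_k^* \oplus U_{n-k}^*, z)$ \emph{at every admissible $z$ simultaneously} is a genuinely stronger statement than the Hammersley--Grimmett result, which controls only the sum over $z$; as you yourself note, the $z$-dependent multinomial weights block a direct descent of the classical splitting identities to individual layers, and neither of your two proposed attack lines (layerwise log-concavity with uniform control of $E(U_{k+1}^*,a)/E(U_k^*,a)$, or an event-tracking refinement of the Hammersley injections) is carried out or even reduced to a precise lemma whose truth could be checked. There is also a secondary issue your induction would need to address: a non-Hammersley split could in principle tie the optimal split at some layer $z$ (the paper exhibits exactly this phenomenon at $(n,z)=(2^k-3,k)$), so any layerwise argument must tolerate non-strict inequalities away from $z=n-1$ without the ties propagating into the totals. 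In short, your proposal is a reasonable plan that correctly identifies the hard step, but it is a plan and not a proof; the statement remains open, in your write-up as in the paper.
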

%%%%%%%%%%%%

Note that part (ii) of the conjecture, together with Theorem \ref{thm: hammersley}, implies part (i). To demonstrate part (i), it suffices to show show part (ii) and to exhibit some value of $z$ for which the topology specified by Theorem \ref{thm: hammersley} uniquely achieves the maximum at that $z$. The required value of $z$ is $z=n-1$: with $z=n-1$, no ties occur, and by Theorem \ref{thm: hammersley}, the tree specified by the proposition uniquely achieves the maximal number of tie-permitting labeled histories with $n$ leaves and $z=n-1$ events.

As stated in part (ii), the maximum does not always occur for a unique unlabeled topology. For example, the two tree shapes in Figure \ref{fig:non_unique_bi} produce the maximal number of tie-permitting labeled histories for $(n,z)=(13, 4)$, namely 2. This pattern of non-uniqueness generalizes. For each $k \geq 4$, for $(n,z)=(2^k - 3, k)$, non-uniqueness occurs with two trees of similar structure. One of the trees---call it $T_1$---has subtrees $L_1$, $R_1$ of sizes $(|L_1|,|R_1|)=(2^{k-1} - 1, 2^{k-1} - 2)$. The other, $T_2$, has subtrees $L_2, R_2$ of sizes $(|L_2|,|R_2|)=(2^{k-1}, 2^{k-1}-3)$. In the two trees, subtrees $L_1, R_1, L_2, R_2$ have the topologies in Theorem \ref{thm: hammersley}. Trees $T_1$ and $T_2$ both produce 2 tie-permitting labeled histories; in both trees, one internal node can be labeled 1 or 2, and all other internal nodes have a fixed label because they lie on a length-$k$ path between leaves and the root. 

%%%%%%%%%%%%%%%%%%%%%% FIGURE 4 %%%%%%%%%%%%%%%%%%%%%%
\begin{figure}
    \centering
    \includegraphics[width=0.6\linewidth]{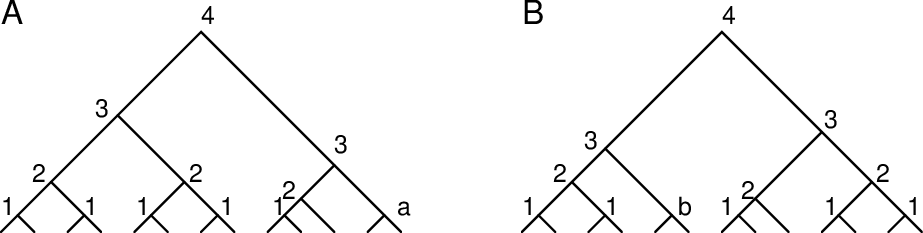}
    \caption{The two rooted bifurcating unlabeled topologies whose labelings produce the maximal number of tie-permitting labeled histories for $(n,z)=(13,4)$. Both trees both produce 2 tie-permitting labeled histories. Internal nodes are labeled by the events to which they are assigned. (A) The topology in Theorem \ref{thm: hammersley}. Two tie-permitting labeled histories are possible, with $a=1$ and $a=2$. (B) An alternative topology. Two tie-permitting labeled histories are possible, with $b=1$ and $b=2$.}
    \label{fig:non_unique_bi}
\end{figure}
%%%%%%%%%%%%%%%%%%%%%%%%%%%%%%%%%%%%%%%%%%%%%%%%%%%%%

%%%%%%%%%%%%%%%%%%%%%%%%%%%%%%%%%%%%%%%%%% Discussion %%%%%%%%%%%%%%%%%%%%%
\section{Discussion}

In this paper, we have extended results on the enumeration of labeled histories from strictly $r$-furcating trees to at-most-$r$-furcating trees. With non-simultaneous branching, we enumerated the total number of at-most-$r$-furcating trees on $n$ leaves (Proposition \ref{eq: num_at_most_r}) and the number of labeled histories for a specific topology (Proposition \ref{prop:non-sim_specific}). We then identified the at-most-$r$-furcating  tree with the largest number of labeled histories as the maximally probable bifurcating tree (Theorem \ref{thm:bi_max_prob}). 

Next, allowing simultaneous branching, we enumerated the total number of tie-permitting labeled histories for at-most-$r$-furcating trees on $n$ leaves (Proposition \ref{eq:arb_at_most_r_sim}) and the number of tie-permitting labeled histories for a specific topology (Proposition \ref{eq:sim_for_at_most_r}). We then reduced the problem of identifying the at-most-$r$-furcating tree with the largest number of labeled histories to that of finding the bifurcating tree with the largest number of tie-permitting labeled histories (Theorem \ref{thm:sim_max_bi}). 

This work continues recent extensions of enumerative phylogenetic results from standard settings of non-simultaneity and bifurcation to allow multifurcation and simultaneity. Recent interest in multifurcation traces to its potential applicability in pathogen transmission models, rapid speciation, and genealogical models for organisms in which some individuals have very large numbers of offspring~\cite{Eldon20, Maranca2023, Wakeley08}. Such settings can also possess simultaneous coalescence, so that simultaneity is of interest in extended models as well~\cite{Dickey25, King23}. The consideration of at-most-$r$-furcation here extends beyond the strict $r$-furcation in \cite{Dickey25}. 

The mathematics of labeled histories has correspondences with concepts in other settings. King \& Rosenberg \cite{King23} discussed how the labeled histories of a labeled topology correspond to sequences in which the games in a single-elimination tournament can be played. If simultaneity is permitted, then tie-permitting labeled histories specify tournament schedules allowing for multiple games to be played simultaneously in different arenas. More generally, the area of operations research considers precedence constraints for task scheduling~\cite{Pinedo22}, and a labeled history corresponds to a set of precedence constraints; simultaneity in labeled histories is analogous to availability of multiple machines in operations research. Just as labeled topologies vary in their numbers of labeled histories, sets of precedence constraints vary in their numbers of valid schedulings~\cite{Prot18}; problems of finding maximally probable labeled topologies correspond to problems of finding, for a set of tasks, the precedence constraints that give rise to the largest number of valid schedulings.

A number of computations in this study recapitulate and generalize known number sequences. For example, in Table 1, $A_2(n)$ in the $r=2$ column recovers sequence A006472, $A_3(n)$ recovers sequence A358072, and the diagonal $A_n(n)$ recovers sequence A256006. In Table 2, the $r=2$ column, $S_2(n)$, corresponds to OEIS sequence A317059, and the diagonal, $S_n(n)$, corresponds to sequence A005121. Tables 3 and 4 (transposed) correspond to sequence A378855, with column sums following A380767.

An outstanding problem in this study is that of identifying the (possibly non-unique) bifurcating tree shape whose labelings possess the largest number of tie-permitting labeled histories. Based on computations up to $n=21$ (Tables \ref{tab:conjecture_bi_sim} and \ref{tab:conjecture_bi_sim_2}), in Conjecture \ref{conj:max_bi_sim}, we have conjectured that the (unique) bifurcating tree shape that produces the maximal number of tie-permitting labeled histories follows Theorem \ref{thm: hammersley}. A related claim, which we have demonstrated in Theorem \ref{thm:sim_max_bi}, is that the (possibly non-unique) at-most-$r$-furcating tree shape with the most tie-permitting labeled histories is a bifurcating shape. Hence, if the conjecture holds, then the (unique) at-most-$r$-furcating tree shape with the most tie-permitting labeled histories also has the bifurcating shape in Theorem \ref{thm: hammersley}. 

%%%%%%%%%%%%%%%%%% TABLE 3 %%%%%%%%%%%%%%%%%%%%%
\begin{table}[tb]
    \centering
    \footnotesize
    \begin{tabular}{|c|c|c|c|c|c|c|c|c|c|c|c|c|c|c|}
        \hline
          & \multicolumn{14}{c|}{Number of leaves, $n$}  \\ 
        \cline{2-15}
        $z$   & 2 & 3 & 4 & 5 &  6 &  7 &   8 &     9 &    10 &     11 &      12 &        13 &         14 &          15 \\ \hline
        1     & 1 & 0 & 0 & 0 &  0 &  0 &   0 &     0 &     0 &      0 &       0 &         0 &          0 &           0 \\ \hline
        2     & 0 & 1 & 1 & 0 &  0 &  0 &   0 &     0 &     0 &      0 &       0 &         0 &          0 &           0 \\ \hline
        3     & 0 & 0 & 2 & 2 &  2 &  1 &   1 &     0 &     0 &      0 &       0 &         0 &          0 &           0 \\ \hline
        4     & 0 & 0 & 0 & 3 &  9 & 12 &  22 &    10 &    10 &      5 &       5 &         2 &          2 &           1 \\ \hline
        5     & 0 & 0 & 0 & 0 &  8 & 30 & 102 &   114 &   198 &    204 &     344 &       278 &        434 &         412 \\ \hline
        6     & 0 & 0 & 0 & 0 &  0 & 20 & 160 &   380 & 1,100 &  1,930 &   4,890 &     6,360 &     14,016 &      20,130 \\ \hline
        7     & 0 & 0 & 0 & 0 &  0 &  0 &  80 &   485 & 2,495 &  7,260 &  27,110 &    53,000 &    159,560 &     321,820 \\ \hline
        8     & 0 & 0 & 0 & 0 &  0 &  0 &   0 &   210 & 2,478 & 12,810 &  72,702 &   211,365 &    866,775 &   2,390,150 \\ \hline
        9     & 0 & 0 & 0 & 0 &  0 &  0 &   0 &     0 &   896 & 10,640 & 101,024 &   451,164 &  2,572,052 &   9,685,872 \\ \hline
        10    & 0 & 0 & 0 & 0 &  0 &  0 &   0 &     0 &     0 &  3,360 &  70,080 &   529,116 &  4,408,404 &  23,150,064 \\ \hline 
        11    & 0 & 0 & 0 & 0 &  0 &  0 &   0 &     0 &     0 &      0 &  19,200 &   321,600 &  4,357,632 &  33,549,120 \\ \hline 
        12    & 0 & 0 & 0 & 0 &  0 &  0 &   0 &     0 &     0 &      0 &       0 &    79,200 &  2,307,360 &  28,979,280 \\ \hline
        13    & 0 & 0 & 0 & 0 &  0 &  0 &   0 &     0 &     0 &      0 &       0 &         0 &    506,880 &  13,728,000 \\ \hline
        14    & 0 & 0 & 0 & 0 &  0 &  0 &   0 &     0 &     0 &      0 &       0 &         0 &          0 &   2,745,600 \\ \hline
        Total & 1 & 1 & 3 & 5 & 19 & 63 & 365 & 1,199 & 7,177 & 36,209 & 295,355 & 1,652,085 & 15,193,115 & 114,570,449 \\ \hline
    \end{tabular}
\caption{For $2 \leq n \leq 15$ leaves, the number of tie-permitting labeled histories for the maximally probable bifurcating tree shape, allowing simultaneity. Columns correspond to the number of leaves, $n$, and rows to the number of events, $z$, $\lceil \log_2 n \rceil \leq z \leq n-1$. Each entry is found by computing $E(T, z)$ using Proposition \ref{eq:sim_for_at_most_r}, where $r=2$, for all strictly bifurcating unlabeled tree shapes on $n$ leaves, and taking the maximum. The maximizing tree shape is, in all $(n,z)$ in the table, the shape in Theorem \ref{thm: hammersley}, but that shape is not necessarily the only maximizing shape for an entry $(n,z)$. The maximizing total, summing across rows, is unique. The table of values for $(n,z)$ corresponds to OEIS 378855 transposed; the total corresponds to OEIS A380767.}
\label{tab:conjecture_bi_sim}
\end{table}
%%%%%%%%%%%%%%%%%%%%%%%%%%%%%%%%%%%%%%%%%%%%%%%%%%%%%%%%%%%%%%%%%%

%%%%%%%%%%%%%%%%%%%%%%%%%%% Table 4 %%%%%%%%%%%%%%%%%%%%%%%%%%%%%%
\begin{table}[tb]
    \centering
    \footnotesize
    \vspace{1cm}
    \begin{tabular}{|c|c|c|c|c|c|c|}
        \hline
         & \multicolumn{6}{c|}{Number of leaves, $n$}  \\ \cline{2-7}
        $z$   &            16 &            17 &             18 &              19 &                 20 &                 21 \\ \cline{1-7}
        4     &             1 &             0 &              0 &               0 &                  0 &                  0 \\ \cline{1-7}
        5     &           672 &           260 &            260 &             130 &                130 &                 52 \\ \cline{1-7}
        6     &        45,914 &        35,108 &         53,756 &          50,188 &             81,268 &             63,676 \\ \cline{1-7}
        7     &       973,300 &     1,147,560 &      2,409,000 &       3,319,860 &          7,396,980 &          8,658,240 \\ \cline{1-7}
        8     &     9,396,760 &    15,642,395 &     42,972,365 &      80,327,145 &        231,570,595 &        366,969,220 \\ \cline{1-7}
        9     &    49,410,424 &   111,849,304 &    393,883,672 &     960,564,444 &      3,480,089,340 &      7,122,959,508 \\ \cline{1-7}
        10    &   155,188,488 &   471,859,668 &  2,117,397,324 &   6,617,863,308 &     29,725,413,060 &     76,673,425,752 \\ \cline{1-7}
        11    &   304,369,008 & 1,250,312,856 &  7,186,950,312 &  28,651,896,456 &    158,936,626,776 &    510,467,689,056 \\ \cline{1-7}
        12    &   376,231,680 & 2,140,177,050 & 16,024,041,990 &  81,957,989,850 &    563,604,210,510 &  2,246,305,636,905 \\ \cline{1-7}
        13    &   284,951,040 & 2,365,158,180 & 23,815,148,060 & 158,943,918,980 &  1,370,645,607,980 &  6,805,224,583,410 \\ \cline{1-7}
        14    &   120,806,400 & 1,630,311,540 & 23,382,250,892 & 210,260,550,500 &  2,321,850,953,708 & 14,524,535,914,746 \\ \cline{1-7}
        15    &    21,964,800 &   637,665,600 & 14,570,322,624 & 186,971,378,880 &  2,737,984,132,416 & 22,032,147,914,088 \\ \cline{1-7}
        16    &             0 &   108,108,000 &  5,222,016,800 & 106,955,008,160 &  2,204,742,219,680 & 23,627,935,026,696 \\ \cline{1-7}
        17    &             0 &             0 &    820,019,200 &  35,568,332,800 &  1,156,376,166,400 & 17,515,815,233,280 \\ \cline{1-7}
        18    &             0 &             0 &              0 &   5,227,622,400 &    356,157,235,200 &  8,541,180,560,640 \\ \cline{1-7}
        19    &             0 &             0 &              0 &               0 &     48,881,664,000 &  2,465,468,928,000 \\ \cline{1-7}
        20    &             0 &             0 &              0 &               0 &                  0 &    319,258,368,000 \\ \cline{1-7}
        Total & 1,323,338,487 & 8,732,267,521 & 93,577,466,255 & 822,198,823,101 & 10,952,623,368,043 & 98,672,511,931,269 \\ 
        \hline
    \end{tabular}
\caption{For $16 \leq n \leq 21$ leaves, the number of tie-permitting labeled histories for the maximally probable bifurcating tree shape, allowing simultaneity. The table continues Table \ref{tab:conjecture_bi_sim}, whose design it follows.}
\label{tab:conjecture_bi_sim_2}
\end{table}
%\end{landscape}
%%%%%%%%%%%%%%%%%%%%%%%%%%%%%%%%%%%%%%%%%%%%%%%%%%%%%%%%%%%%%%%%%%%%

%%%%%%%%%% BIBLIOGRAPHY %%%%%%%%%%%%%%%%%%%%%%%%
\bigskip
\noindent {\bf Acknowledgments.} Grant support was provided by National Science Foundation grant DMS-2450005.

\bibliographystyle{plain}
\bibliography{multifurcating2}
\clearpage
%%%%%%%%%%%%%%%%%%%%%%%%%%%%%%%%%%%%%%%%%%%%%%%%

\end{document}